\newtheorem{lemma}{Lemma}
\newtheorem{corollary}{Corollary}
\newcommand{\RefFigure}[1]{Fig.~\ref{#1}}
\newcommand{\reffig}[1]{fig.~\ref{#1}}
\newcommand{\RefSection}[1]{Section~\ref{#1}}
\newcommand{\refsec}[1]{\S\,\ref{#1}}
\newcommand{\RefAlgorithm}[1]{Algorithm~\ref{#1}}
\newcommand{\refalg}[1]{alg.~\ref{#1}}
\newcommand{\reflemma}[1]{lemma~\ref{#1}}
\newcommand{\refcorr}[1]{corr.~\ref{#1}}
\DeclareMathOperator{\adjList}{adj}
\newcommand{\reftab}[1]{tab.~\ref{#1}}
\begin{document}
%


 \title{Branch-Avoiding Graph Algorithms}


%
%

\numberofauthors{3} 
\author{
\alignauthor
Oded Green\\ 
       \affaddr{Georgia Institute of Technology}\\
       \affaddr{College of Computing} \\
       \affaddr{Atlanta, Georgia, USA} \\
\alignauthor
Marat Dukhan\\ 
       \affaddr{Georgia Institute of Technology}\\
       \affaddr{College of Computing} \\
       \affaddr{Atlanta, Georgia, USA} \\
\alignauthor
Richard Vuduc\\ 
       \affaddr{Georgia Institute of Technology}\\
       \affaddr{College of Computing} \\
       \affaddr{Atlanta, Georgia, USA} \\
}

\maketitle

\begin{abstract}
This paper quantifies the impact of branches and branch mispredictions on the single-core performance for two classes of graph problems. Specifically, we consider classical algorithms for computing connected components and breadth-first search (BFS). We show that branch mispredictions are costly and can reduce performance by as much as 30\%-50\%. This insight suggests that one should seek graph algorithms and implementations that \emph{avoid branches}.

As a proof-of-concept, we devise such implementations for both the classic top-down algorithm for BFS and the Shiloach-Vishkin algorithm for connected components. We evaluate these implementations on current \textsc{x86} and \textsc{ARM}-based processors to show the efficacy of the approach. Our results suggest how both compiler writers and architects might exploit this insight to improve graph processing systems more broadly and create better systems for such problems.

\end{abstract}



\section{Introduction}
\label{sec:intro}

This paper concerns computations on a graph $G=(V,E)$, where $V$ is a set of vertices and $E=\{(u, v)|u, v \in V\}$ is a set of edges.%
  \footnote{In general, the graph may be directed or undirected, weighted or unweighted. These issues do not matter for this study.}
Traditionally, the key challenges associated with creating high-performance graph implementations are computational demand, irregular memory access, difficulty of load balancing, storage, and optimization criteria that cause the problem to be intractable, among others.
In this work, we consider an additional challenge, which is critical to practical implementation but as of yet it is largely unstudied: \emph{branch prediction}, which is an important factor in single-core performance on essentially all modern multi- and emerging manycore processors.
We show subtle and sometimes unexpected performance phenomena that suggest incorrectly predicted branches can reduce single-core efficiency.
These observations suggest that a simple algorithmic redesign, in which branches are avoided, can improve and offer consistent performance.

We are motivated to study branches by the fact that exploiting instruction-level parallelism is critical for achieving high single-core throughput, which is the building block for all higher levels of parallelization (e.g., shared memory or distributed memory).
The presence of a conditional branch interrupts the flow of instructions; if it is not known whether the branch will be taken, the processor cannot know which instruction to fetch next, creating stalls in the pipeline.
To address this problem, a modern processor core tracks the history of a branch, and uses this state to speculatively fetch the next instruction in what it estimates is the most likely outcome.
If it guesses incorrectly, any speculatively executed instructions must be cancelled, causing slowdowns in time and reductions in energy-efficiency.

This paper analyzes two different graph algorithms with respect to their branching behavior:
connected components, based on the classic Shiloach-Vishkin (SV) algorithm~\cite{ShiloachVishkin}, and the classical form of breadth-first search (BFS)~\cite{Cormen2001}, sometimes referred to as the ``top-down'' algorithm~\cite{beamer2012direction}.
SV is a propagation-based algorithm and BFS is a shortest-path algorithm.
The findings of our paper can in principle be extended to both families of algorithms, including All-Pairs Shortest-Paths (APSP)~\cite{Floyd1962,Warshall1962}, betweenness centrality~\cite{Freeman1977,Brandes2001}, and depth-first search~\cite{HopcroftTarjan}, among numerous others.

We quantify the effect of branch mispredictions for SV and BFS, both analytically and empirically.
In our empirical studies, we write and analyze highly-tuned assembly language versions of these algorithms.
We show that SV, which performs an equal amount of work in every iteration, suffers a performance penalty in its early iterations due in part to an increase in the number of branch mispredictions (i.e., \emph{branch misses}).
In its later iterations, when the branch prediction accuracy increases, the performance increases as well.
This observation motivates a \emph{branch-avoiding} algorithm that reduces the number of branches and branch mispredictions that the algorithm incurs, and overall speedups over the highly tuned branch-based assembly implementation.
The variations in per-iteration performance and number of executed instructions of SV essentially goes away in the branch-avoiding version.

BFS also exhibits branch mispredictions, and we develop a branch-avoiding algorithm for it as well.
However, our specific algorithm significantly increases the number of stores operations by more than an order of magnitude.
Consequently, there is no performance win for BFS in contrast to SV. 
Nevertheless, taken together we believe these two cases, SV and BFS, raise a number of intriguing new questions, both about the role of branch-avoidance in algorithm design, whether compilers can produce our hand-generated transformations, and whether additional architectural support could exploit the branching behavior we observe and mitigate cases of performance loss.



\section{Related Work}
\label{sec:related}

This paper focuses on connected components (CC) and breadth-first search (BFS), in part because they are primitive building blocks of higher-level graph analytics. 
Su\-ch analytics include connected components itself~\cite{ShiloachEven,mccollscc}, as well as computing modularity~\cite{newman2004finding}, detecting communities~\cite{newman2004finding,riedy2012parallel}, partitioning graphs~\cite{karypis1995metis}, computing clustering coefficients~\cite{watts1998collective}, computing centrality metrics (e.g., betweenness centrality~\cite{Freeman1977,Brandes2001,GreenStreaming}, closeness centrality~\cite{sabidussi1966centrality}), as well as computing a wide variety of distance based analytics.
A variety of packages implement these analytics, including STIN\-GER~\cite{stinger-tr,stinger-inserts}, GraphCT~\cite{GraphCT,ediger2013graphct}, Ligra~\cite{shun2013ligra}, Pregel~\cite{malewicz2010pregel}, and the Combinatorial BLAS~\cite{Buluc01112011}.
However, the focus of these packages is on exploiting higher-level shared memory multicore, manycore, distributed memory parallelism~\cite{yoo2005scalable,gregor2005parallel,bulucc2011parallel,checconi2012breaking,beamer2013distributed}, and massively multithreaded systems \cite{bader2006designing,barrett2009implementing}.
Thus, our study of low-level single-core behavior and instruction-level parallelism complements and should apply broadly to this large body of existing work.
 
\paragraph{Branch predictors}
The large body of prior work on branch predictors has focused on their design and implementation in hardware; see Smith's survey of strategies~\cite{smith1981study} among other seminal references~\cite{lee1984branch,yeh1991two,yeh1992alternative,sprangle1997agree,lee1997bi,eden1998yags}.
Little is known publicly about the actual implementation of the branch predictors in modern processors, since these are vendor-specific and proprietary.
As such, there is some ongoing empirical research that tries to demystify these implementations using synthetic benchmarks~\cite{milenkovic2002demystifying,fog2013microarchitecture}.
However, with few exceptions, most of the other work on branch prediction evaluates against a general benchmark suites, such as SPECint2006 and SPECfp2006 benchmarks.
Therefore, they do not provide the additional level of understanding possible with a focus on more specific and application-oriented kernels, as in our study.

\paragraph{Performance engineering of graph computations}
There is some work on low-level performance engineering of graph computations.
Green-Marl is domain specific language, which targets shared-memory platforms\cite{hong2012green}.
It emits backend code that manages shared variables using, for instance, atomic instructions; from published code samples, its implementations are branch-based.
Cong and Makarychev present cache-friendly implementations of graph algorithms \cite{cong2012optimizing}.
They quantify how software prefetching improves spatial locality on both the Power7 and the Sun Niagara2. Both systems support multiple threads per core, which can help in memory latency hiding.

For BFS specifically, there are additional studies.
Chhu\-ga\-ni \emph{et el.} present a shared-memory parallel BFS~\cite{chhugani2012fast}.
They focus on reducing cross-socket communication. Their implementation is lock-free.
Merrill and Garland have developed a highly-tuned GPU implementation~\cite{Merrill}.
Beamer \emph{et al.} have proposed algorithmic changes, which they refer to as being direction- optimizing~\cite{beamer2012direction}.
Though there are many interesting ideas in this body of work, we are not aware of a detailed study of the impact of branches.

\paragraph{Graph property characterizations}
Many researchers have characterized high-level properties of real-world graphs, such as the common existence of power-law degree distributions and the small-world phenomenon~\cite{milgram1967small,watts1998collective,BarabasiDiameter,Leskovec2007,BarabasiPower,faloutsos1999power,Broder2000309}.
Our analysis below is justified in part by some of these findings, such as the existence of a large connected component~\cite{Broder2000309}, which has implications for how our target graph computations will behave.

At a lower-level, Burtscher et al. develop metrics to quantify irregularity, with respect to both memory accesses and control-flow~\cite{burtscher2012quantitative}.
They use these metrics to compare different computations, including graph computations, confirming some aspects of conventional wisdom about what we consider ``regular'' versus ``irregular.''
However, it is not clear (to us) how to translate these metrics into actionable transformations of code that improve performance.
%


\section{Branch Prediction}
\label{sec:branchPred}
\newcommand{\StronglyNotTaken}{\textsc{Strongly-Not-Taken}\xspace}
\newcommand{\WeaklyNotTaken}{\textsc{Weakly-Not-Taken}\xspace}
\newcommand{\Taken}{\textsc{Taken}\xspace}
\newcommand{\NotTaken}{\textsc{Not-Taken}\xspace}
\newcommand{\StronglyTaken}{\textsc{Strongly-Taken}\xspace}
\newcommand{\WeaklyTaken}{\textsc{Weakly-Taken}\xspace}

 \begin{figure}
  \includegraphics[width=\columnwidth,clip=true,trim=.5in 0in .5in 0in]{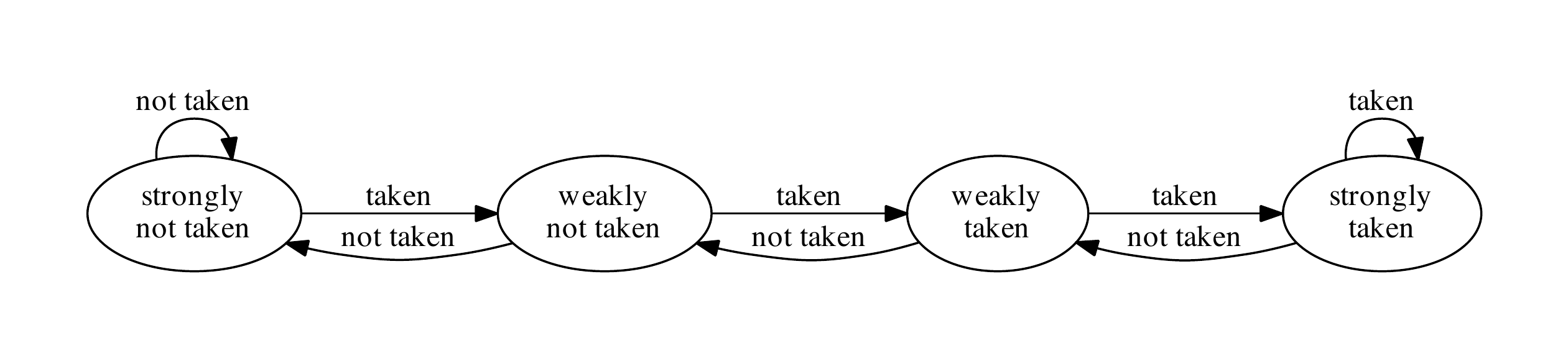}
  \caption{
    A 2-bit branch predictor behaves as shown in this finite-state automaton.
    Each node is a state representing the next prediction, e.g., the strongly and weakly taken states predict ``taken,'' the others, ``not taken.''
    Each edge shows how the state changes once the actual branch condition is resolved.
  }
  \label{fig:branch-pred-fsa}
\end{figure} 

Given a particular (static) \emph{conditional}%
  \footnote{As opposed to an unconditional branch, which always jumps and therefore does not need to be predicted.}
branch in a graph algorithm, our analysis goal is to estimate how many times the branch predictor will mispredict it.%
Although we do not know exactly what type of branch predictor a vendor implements, our analysis assumes a \emph{2-bit branch predictor}~\cite{smith1981study}.
The empirical evaluation of \refsec{sec:experiment} will justify this choice.
Like most branch prediction techniques, it uses the history of previous executions of a given branch to predict the next outcome;%
  \footnote{For instance, a simple \emph{1-bit} predictor predicts that if the last occurrence of a given branch was taken, then so will the next one.}
as such, one may formalize the analysis of predictors mathematically using Markov chains and reason about expected branch misses, which we have done.
However, for concerns of readability and space, this paper omits the details of such analysis, instead stating the key results and offering more intuitive high-level explanations.

\begin{figure*}[t]
\centering
	\subfloat[]{\includegraphics[width=0.2\textwidth]{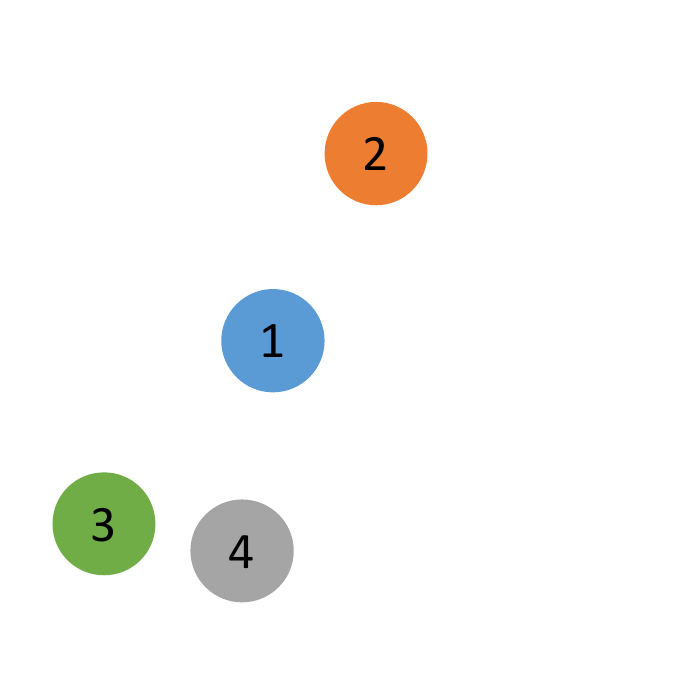}}
	\subfloat[]{\includegraphics[width=0.2\textwidth]{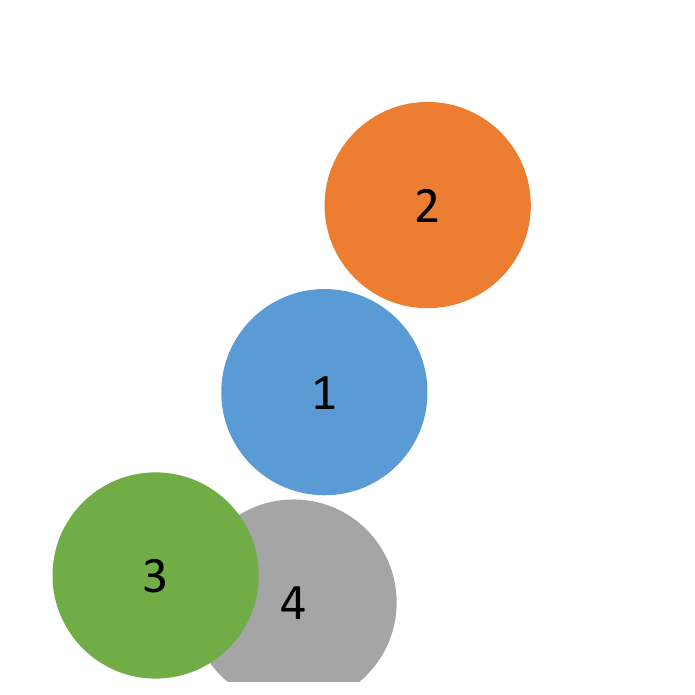}}
	\subfloat[]{\includegraphics[width=0.2\textwidth]{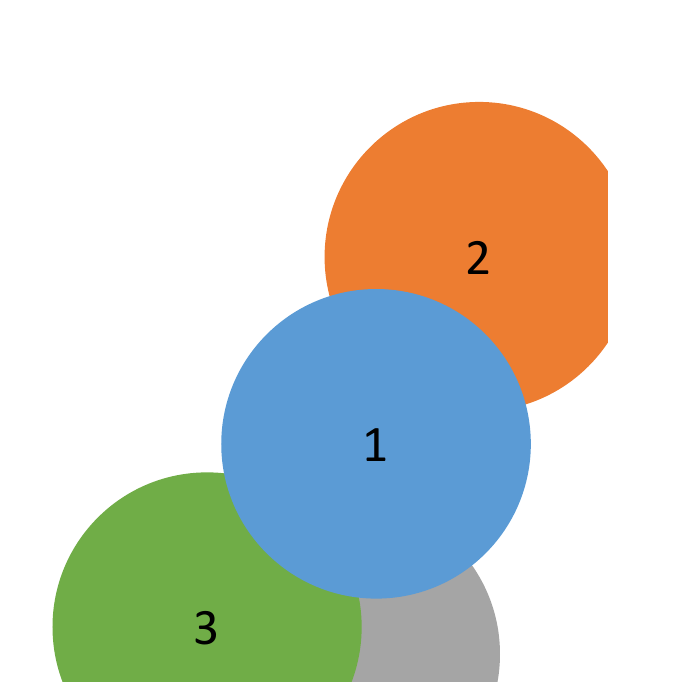}}
	\subfloat[]{\includegraphics[width=0.2\textwidth]{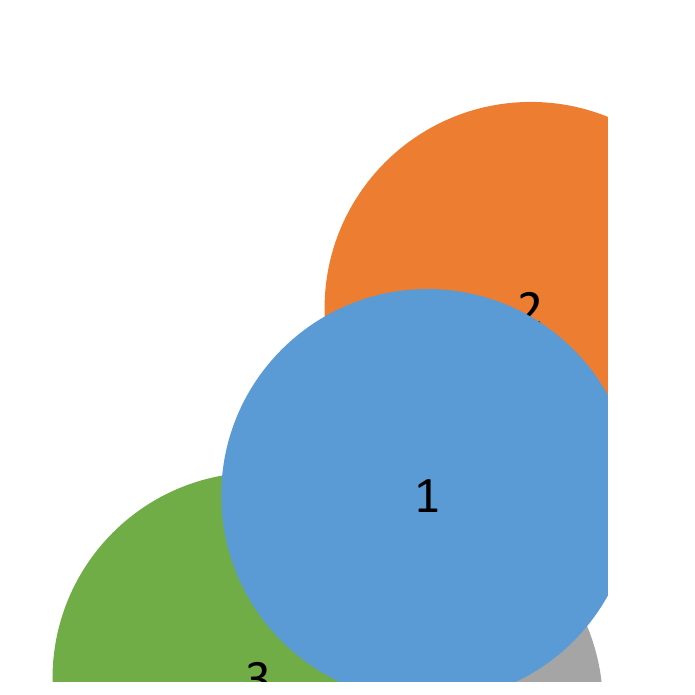}}
	\subfloat[]{\includegraphics[width=0.2\textwidth]{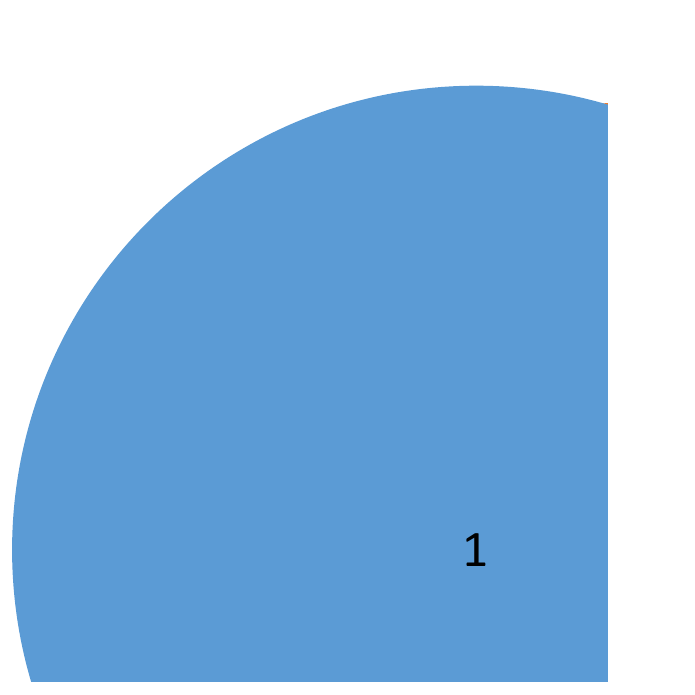}}

  \caption{These sub-figures conceptually show how the connected component id propagates through the graph as time evolves - each subfigure is for a different iteration of the algorithm. This example assumes that all vertices are connected and for simplicity shows only the connected components are 1 through 4. Initially the number of connected components is equal to the number vertices. (a) Depicts the initial state in which each vertex is in its own component. (b)-(d) depict that some vertices belong to the same connected component yet may require multiple label updates (in either the same iteration or a separate iteration).
	(e) is the final state in which there is a single connected component.}
  \label{fig:svPropagation}
\end{figure*}

\begin{algorithm}[t]
  Let $n \geq 0$\;
  $i \leftarrow 0$\;
  \While{$i < n$}{
    \tcp*[h]{\ldots\ $i$, $n$ unmodified; no early exits \ldots}
    \newline
    $i \leftarrow i + 1$\;
  }
  \caption{A simple sequential while-loop, which executes its body exactly $n$ times}
  \label{alg:loop}
\end{algorithm}

\subsection{A model of 2-bit predictors}
For each (static) conditional branch in the program, a 2-bit predictor maintains a 2-bit state value, which encodes four possible states.
Each state value is a prediction for the next occurrence of this branch;
once the true branch condition is known, this state is updated.
The precise states and transitions appear in the finite-state automaton (FSA) of \reffig{fig:branch-pred-fsa}.
In particular, there are four possible states, named \StronglyTaken, \WeaklyTaken, \WeaklyNotTaken, and \StronglyNotTaken.
The ``strong'' states reflect that the last few branches were all the same, i.e., all ``taken'' or all ``not taken,'' and so it is likely the next branch will be the same.
The weak states allow for the predictor's bias to change if a new pattern emerges.

We will further assume that the processor has enough branch state storage to track, for each conditional branch of interest, its 2-bit state for the duration of the program.
That is, we will not consider the case when the processor runs out of branch state storage and must ``evict'' (and therefore losing or resetting) the branch state.

\subsection{Analysis of simple loops}
\label{subsec:bpandga}

Several common programming patterns in graph algorithms are \emph{simple sequential loops}, which iterate over, the set of vertices, edges, neighbors of a vertex, i.e., the \emph{adjacency list}.
Consider, for example, the \emph{simple sequential while-loop} of \refalg{alg:loop}.
By ``simple,'' we mean that (a) the iteration variable $i$ increases monotonically by 1 at each iteration; (b) the loop bound $n$ is constant as the loop executes; and (c) there are no early exits.
Thus, this loop executes its body exactly $n$ times.
The conditional branch in this case evaluates the condition, $i < n$.
We will assume the convention for this loop is such that the branch is taken when the condition is true, and not taken when the condition is false.%
  \footnote{This choice is arbitrary and depends on the specific code generated. There is an equivalent argument if one assumes code such that the branch is taken only when the condition is false.}
There will be exactly $n+1$ evaluations of this branch, only the last of which is \emph{not} taken in order to exit the loop.%
  \footnote{There is an additional branch at the bottom of the loop. However, this branch is \emph{unconditional}, since it \emph{must} jump back to the top of the loop.}
We can state a number of facts about such loops, assuming the 2-bit branch predictor.

\begin{lemma}
When $n \geq 3$, the final state of the 2-bit predictor is \WeaklyTaken.
\label{lemma:afterWeaklyTaken}
\end{lemma}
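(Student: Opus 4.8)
The plan is to track the predictor's state through the complete sequence of branch outcomes and to exploit the fact that a 2-bit (saturating) predictor locks onto a repeated outcome after only a few steps. First I would write down the outcome sequence explicitly: as observed just above the lemma, the conditional branch is evaluated exactly $n+1$ times, the condition $i<n$ is true on the first $n$ evaluations and false on the last, so the predictor sees the outcome \Taken exactly $n$ times in a row, followed by a single \NotTaken. The final state is thus obtained by applying $n$ taken-transitions and then one not-taken-transition of the FSA in \reffig{fig:branch-pred-fsa}, starting from whatever state the predictor happens to hold upon entry to the loop.

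The central step is to show that repeated \Taken outcomes drive the predictor to \StronglyTaken and keep it there. The four states lie on a line, from \StronglyNotTaken to \WeaklyNotTaken to \WeaklyTaken to \StronglyTaken, and reading the FSA shows that each taken-transition advances the state one position toward \StronglyTaken, or leaves it fixed once \StronglyTaken is reached. Consequently, from the worst-case starting state \StronglyNotTaken at most three consecutive \Taken outcomes are needed to reach \StronglyTaken, and from any other state even fewer. This is precisely where the hypothesis $n \geq 3$ is used: three or more leading \Taken outcomes guarantee that the predictor sits in \StronglyTaken immediately after the $n$-th \Taken, independently of its initial state. Applying the single trailing not-taken-transition from \StronglyTaken then lands the predictor in \WeaklyTaken, as claimed.

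I do not expect a genuine obstacle: the argument reduces to a short, finite case analysis on a four-state automaton. The one point deserving care --- and the reason the lemma carries the bound $n \geq 3$ rather than holding for every $n$ --- is the unknown initial predictor state. I would therefore make the ``arbitrary initial state'' assumption explicit, and I would check the small cases $n=1$ and $n=2$ informally to confirm that the threshold is sharp: starting from \StronglyNotTaken, two leading \Taken outcomes bring the predictor only as far as \WeaklyTaken, one position short of the saturated state \StronglyTaken, so the single trailing \NotTaken demotes it to \WeaklyNotTaken and the conclusion fails. A third \Taken is exactly what is needed to reach \StronglyTaken before the final \NotTaken, which is why saturation becomes unavoidable only once $n \geq 3$.
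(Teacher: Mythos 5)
Your proposal is correct and follows essentially the same route as the paper's proof: identify the outcome sequence as $n$ taken branches followed by one not-taken, note that from the worst-case initial state \StronglyNotTaken three taken transitions suffice to reach \StronglyTaken, and conclude that the final not-taken branch moves the predictor to \WeaklyTaken. Your added remarks on the monotone structure of the FSA and the sharpness of the bound $n \geq 3$ are sound elaborations of the same argument, not a different approach.
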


\begin{proof}
The conditional branch is taken $n$ times.
In the worst case, we begin the loop in the \StronglyNotTaken state.
According to the FSA of \reffig{fig:branch-pred-fsa}, after three taken state transitions, the predictor will be in the \StronglyTaken state.
Since the final branch is \emph{not} taken, the predictor must move into the \WeaklyTaken state.
\end{proof}

\begin{lemma}
When $n \geq 3$, the maximum number of branch mispredictions incurred by the loop's conditional test (ignoring conditional branches in the body) is 3.
\label{lemma:adjTraversal}
\end{lemma}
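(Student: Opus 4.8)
The plan is to track the 2-bit predictor's state through the exact stream of branch outcomes and count disagreements between prediction and reality. By the discussion preceding the lemma, this stream consists of $n$ consecutive \Taken\ outcomes (the loop continues) followed by a single \NotTaken\ outcome (the exit). A misprediction occurs precisely when the current prediction disagrees with the outcome: a state predicting \Taken\ (either \StronglyTaken\ or \WeaklyTaken) meeting a \NotTaken\ outcome, or a state predicting \NotTaken\ (either \StronglyNotTaken\ or \WeaklyNotTaken) meeting a \Taken\ outcome. I would organize the count into a warm-up phase over the $n$ taken branches and a separate exit branch, and argue that the worst case is to begin in \StronglyNotTaken, the state farthest from predicting \Taken\ in the FSA of \reffig{fig:branch-pred-fsa}.

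First I would bound the mispredictions during the taken phase. The only way to err here is to be in a state predicting \NotTaken\ while the branch is actually taken. Starting from the worst-case \StronglyNotTaken, the FSA path toward a state that predicts \Taken\ is $\StronglyNotTaken \to \WeaklyNotTaken \to \WeaklyTaken$, consuming exactly two taken transitions, each of which is a misprediction. After these two steps the predictor already predicts \Taken, and every remaining taken branch is then predicted correctly while pushing the state toward \StronglyTaken. Hence the taken phase contributes at most $2$ mispredictions, attained from \StronglyNotTaken\ and strictly fewer from any other start.

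Next I would handle the final \NotTaken\ branch by invoking \reflemma{lemma:afterWeaklyTaken}, which says the final state is \WeaklyTaken. Since the unique FSA edge that enters \WeaklyTaken\ on a \NotTaken\ outcome originates from \StronglyTaken, the predictor must have been in \StronglyTaken\ immediately before the exit branch; because \StronglyTaken\ predicts \Taken, the exit branch is mispredicted, contributing exactly $1$. Summing the two phases gives at most $2 + 1 = 3$ mispredictions, and the \StronglyNotTaken\ start attains this bound, so the maximum is exactly $3$.

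The main obstacle is careful bookkeeping rather than any conceptual depth. I must verify that once the predictor first predicts \Taken\ during warm-up, the remaining taken branches cannot knock it back into a state predicting \NotTaken\ (they cannot, as taken transitions only move toward \StronglyTaken), and I should confirm that the hypothesis $n \geq 3$ is exactly what guarantees saturation to \StronglyTaken\ before the exit branch, so the warm-up and exit contributions do not overlap. For tightness I would also check the three non-worst-case starting states (\WeaklyNotTaken, \WeaklyTaken, \StronglyTaken) and note each yields strictly fewer than $3$ mispredictions, confirming that $3$ is the true maximum.
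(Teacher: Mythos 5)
Your proposal is correct and follows essentially the same route as the paper's own proof: the worst case starts in \textsc{Strongly-Not-Taken}, contributing at most $2$ mispredictions during the taken phase, and the final not-taken branch, encountered while the predictor is saturated at \textsc{Strongly-Taken}, contributes exactly $1$ more, for a maximum of $3$. Your additional checks of tightness and of the non-worst-case starting states are a harmless elaboration of the same argument.
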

\begin{proof}
As with \reflemma{lemma:afterWeaklyTaken}, the the initial state of the predictor may be \StronglyNotTaken, which will cause 2 mispredictions before reaching either of the \Taken states.
For the last loop iteration, when $i=n \geq 3$, the predictor will be in the \StronglyTaken state but branch will be taken, incurring one more branch miss.
Thus, there could be up to 3 misses.
Furthermore, there must be at least 1 branch miss, which occurs on the last (not taken) branch;
the reason is that the predictor must be in the \StronglyTaken state by iteration $i=n-1$, independent of the initial state.
\end{proof}

\begin{lemma}
Suppose we execute the same loop $k \geq 2$ times (such as in the case of nested loops), where $n \geq 3$ on the first execution, and $n \geq 1$ on every subsequent execution. These are nested loops - $k$ designates the outer-loop and $n$ for the inner-loop.
Then there may be up to $k+2$ mispredictions for the inner loop, that is, up to 3 misses during the first execution and 1 additional miss on each of the $k-1$ remaining executions.
\label{lemma:nestedLoop}
\end{lemma}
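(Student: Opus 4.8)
The plan is to build directly on the two preceding lemmas and then argue inductively over the $k$ executions of the inner loop, tracking the predictor state between executions. First I would dispose of the first execution: since $n \geq 3$ there, \reflemma{lemma:adjTraversal} bounds its mispredictions by 3, and \reflemma{lemma:afterWeaklyTaken} guarantees that the predictor is left in the \WeaklyTaken state when this execution terminates. This simultaneously gives the ``up to 3 misses'' claim for the first execution and, crucially, the starting state for everything that follows.

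The heart of the argument is a single-execution claim for every subsequent run: if the inner loop begins in the \WeaklyTaken state and is executed with $n \geq 1$, then it incurs exactly one misprediction and again terminates in the \WeaklyTaken state. To see this I would trace the FSA of \reffig{fig:branch-pred-fsa}. The first conditional test is taken (since $n \geq 1$); the \WeaklyTaken state predicts ``taken,'' so this is correct and the state advances to \StronglyTaken. Every remaining taken test is then predicted correctly and keeps the predictor in \StronglyTaken. The final test, which is not taken so as to exit the loop, is mispredicted from the \StronglyTaken state and moves the predictor back to \WeaklyTaken. Thus exactly one miss occurs and the \WeaklyTaken invariant is restored, independent of the particular value of $n$.

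With this claim in hand, the lemma follows by induction on the $k-1$ subsequent executions: the invariant ``the inner loop starts in \WeaklyTaken'' holds at the start of each subsequent execution (established for the first subsequent execution by \reflemma{lemma:afterWeaklyTaken} and preserved thereafter by the single-execution claim), and each such execution contributes exactly one additional miss. Summing yields at most $3 + (k-1) = k+2$ total mispredictions for the inner loop's conditional test.

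I expect the main obstacle to be the single-execution claim and, in particular, verifying that the \WeaklyTaken-to-\WeaklyTaken invariant is genuinely preserved for \emph{all} admissible $n$, including the small cases $n = 1$ and $n = 2$ that are not covered by \reflemma{lemma:afterWeaklyTaken} (which assumed $n \geq 3$ starting from an arbitrary, possibly \StronglyNotTaken, state). The point is that starting already in \WeaklyTaken sidesteps the warm-up mispredictions counted in \reflemma{lemma:adjTraversal}: a single taken branch suffices to reach \StronglyTaken, so no value of $n \geq 1$ can produce more than the one exit miss. Making this case analysis airtight against the FSA is the only delicate part; everything else is bookkeeping.
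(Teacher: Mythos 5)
Your proposal is correct and follows essentially the same route as the paper's proof: use \reflemma{lemma:afterWeaklyTaken} and \reflemma{lemma:adjTraversal} to get up to 3 misses and a \WeaklyTaken exit state on the first execution, then observe that each subsequent execution with $n \geq 1$ bounces between \StronglyTaken and \WeaklyTaken, contributing exactly one miss. Your version is merely more explicit about the invariant and the small-$n$ cases, which the paper leaves implicit.
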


\begin{proof}
Based on \reflemma{lemma:afterWeaklyTaken} the branch predictor is in the \WeaklyTaken state at the end of the first execution of the loop and may see up to 3 mispredictions.
This state becomes the initial state for the next execution.
If $n \geq 1$ on every execution after the first, then the predictor will move to the \StronglyTaken state; on the last iteration, it will return to the \WeaklyTaken state, incurring 1 misprediction.
That is, we will bounce back-and-forth between \StronglyTaken and \WeaklyTaken.
\end{proof}

\begin{corollary}
If $k \gg 2$, we should expect approximately $k$ branch misses.
\label{corr:repeated-loop-misses}
\end{corollary}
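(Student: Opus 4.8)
The plan is to start from the exact count established in Lemma~\ref{lemma:nestedLoop}, namely that the total number of mispredictions for the inner loop across all $k$ executions is at most $k+2$, and then take the asymptotic view as $k$ grows large. The statement is really a limiting/approximation claim rather than an exact identity, so the proof should make precise what ``approximately $k$'' means and argue that the additive constant is negligible.

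First I would recall the decomposition from Lemma~\ref{lemma:nestedLoop}: the first execution of the inner loop contributes at most $3$ misses (because the predictor may start in \StronglyNotTaken), and each of the remaining $k-1$ executions contributes exactly $1$ miss, arising from the single not-taken branch at loop exit once the predictor has settled into the bounce between \StronglyTaken and \WeaklyTaken. Summing these gives $3 + (k-1) = k+2$ as the upper bound, and correspondingly a lower bound of roughly $1 + (k-1) = k$ misses since each execution is guaranteed at least one miss on its final (not-taken) branch. Thus the total $M(k)$ of mispredictions satisfies $k \le M(k) \le k+2$.

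Next I would draw the asymptotic conclusion: since $M(k) = k + O(1)$ with the constant bounded by $2$, we have $M(k)/k \to 1$ as $k \to \infty$, so for $k \gg 2$ the leading-order term is $k$ and the fixed startup overhead from the first execution is asymptotically insignificant. This is precisely the sense in which one should ``expect approximately $k$ branch misses.''

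The main obstacle is not a technical difficulty but rather pinning down the informal phrase ``approximately'': the honest content is simply that the per-execution steady-state cost is exactly one miss, and all deviation is concentrated in the bounded first-execution transient. I would therefore present the corollary as an immediate consequence of Lemma~\ref{lemma:nestedLoop}, emphasizing that the $+2$ term comes entirely from the initial transient and is dominated by the linear $k$ term when $k \gg 2$.
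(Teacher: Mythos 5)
Your proposal is correct and matches the paper's intent: the corollary is stated as an immediate consequence of Lemma~\ref{lemma:nestedLoop} (the paper gives no separate proof), and your sandwich bound $k \le M(k) \le k+2$ is exactly the right way to make the informal ``approximately $k$'' precise. The only thing you add beyond the paper is the explicit lower bound of one guaranteed miss per execution, which is a sound and worthwhile clarification rather than a different approach.
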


\begin{lemma}
Suppose $n=0$. Then the predictor will move toward the \StronglyNotTaken state and cannot be in the \StronglyTaken state;
furthermore, it will incur either 0 or 1 branch misses.
\end{lemma}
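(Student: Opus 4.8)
The plan is to reduce the entire claim to a single transition of the finite-state automaton of \reffig{fig:branch-pred-fsa}. First I would pin down the branch trace. Recall the earlier observation that the loop's conditional is evaluated $n+1$ times, the last of which is \NotTaken. Substituting $n=0$ yields exactly one evaluation, and under the stated taken-iff-true convention this single evaluation is \NotTaken, since $0 < 0$ is false and the body never runs. Thus the whole behavior is governed by applying one \NotTaken transition starting from whatever the predictor's initial state happens to be.

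Next I would carry out the four-case check dictated by the FSA. A single \NotTaken edge sends \StronglyTaken to \WeaklyTaken, \WeaklyTaken to \WeaklyNotTaken, \WeaklyNotTaken to \StronglyNotTaken, and \StronglyNotTaken to itself. Two structural facts then drop out. First, in every case the resulting state is one step nearer \StronglyNotTaken than the initial state (or already equals it), which is precisely the ``moves toward the \StronglyNotTaken state'' claim. Second, none of the four target states is \StronglyTaken; in fact it suffices to note that the only incoming edge to \StronglyTaken in the FSA is labeled \Taken, so a \NotTaken outcome can never land there. This proves the ``cannot be in the \StronglyTaken state'' clause directly, without even enumerating the cases.

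For the misprediction count I would split the four initial states according to what they predict. The two ``taken''-predicting states, \StronglyTaken and \WeaklyTaken, mispredict against this lone \NotTaken outcome and so contribute exactly one miss, whereas the two ``not-taken''-predicting states, \WeaklyNotTaken and \StronglyNotTaken, predict correctly and contribute zero. Since there is only the single evaluation, the total is therefore either $0$ or $1$, as asserted.

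I do not anticipate any genuine obstacle, as this is a finite check on a four-state automaton. The one point requiring care is the very first step: confirming that $n=0$ produces exactly \emph{one} branch evaluation and that it is \NotTaken under the chosen convention. Indeed, the tight ``either $0$ or $1$'' bound is what forces the single-evaluation reading, since two or more consecutive \NotTaken outcomes could instead produce two misses (for instance, starting from \StronglyTaken, the first two \NotTaken outcomes are both mispredicted before the predictor settles).
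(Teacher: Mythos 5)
Your proof is correct: with $n=0$ the loop condition is evaluated exactly once and is \NotTaken, and your case analysis of a single \NotTaken transition in the four-state automaton establishes all three claims (movement toward \StronglyNotTaken, impossibility of \StronglyTaken, and a miss count of $0$ or $1$ depending on whether the initial state predicts ``taken''). The paper actually states this lemma without any proof, and your argument is exactly the direct FSA-based reasoning used in the paper's proofs of Lemmas~\ref{lemma:afterWeaklyTaken}--\ref{lemma:nestedLoop}, so it fills the gap in the intended style.
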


\begin{lemma}
Suppose $n=1$. Then the predictor will return to its initial state, incurring either 1 or 2 branch misses.
\end{lemma}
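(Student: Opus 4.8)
The plan is to argue directly from the concrete branch-outcome sequence together with the automaton of \reffig{fig:branch-pred-fsa}, doing a case analysis on the predictor's initial state. First I would pin down what the predictor actually sees when $n=1$. By the general count of $n+1$ evaluations of the conditional test, there are exactly two: the first, with $i=0$, satisfies $i<n$ and is therefore \Taken; the second, with $i=1$, fails the test and is \NotTaken. Hence the entire string of outcomes presented to this branch is the fixed two-symbol pattern (\Taken, \NotTaken), independent of everything else in the program. The whole lemma is then a statement about feeding this one short string into a four-state machine.

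Next I would trace that length-two string through the automaton from each of the four possible starting states, recording both the miss count and the final state. The mechanical facts I would invoke are the standard 2-bit transitions: a \Taken outcome steps the state one notch toward \StronglyTaken and is predicted correctly exactly when the current state is one of the two ``taken'' states, while a \NotTaken outcome steps one notch toward \StronglyNotTaken and is correct exactly in the two ``not-taken'' states. Carrying this out, from \WeaklyTaken the machine passes through \StronglyTaken and returns; from \StronglyNotTaken it passes through \WeaklyNotTaken and returns; and from \WeaklyNotTaken it passes through \WeaklyTaken and returns. In each of these three cases the state lands precisely where it began, with a single misprediction, except that \WeaklyNotTaken mispredicts on \emph{both} symbols and so costs $2$. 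This already yields the ``$1$ or $2$ misses'' bound and the return-to-initial-state conclusion for these three starts.

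The hard part is the fourth case, \StronglyTaken, which I expect to be the only genuine obstacle. Starting there, the \Taken symbol is predicted correctly and leaves the state at \StronglyTaken, but the trailing \NotTaken symbol is mispredicted and drags the state only down to \WeaklyTaken, so the automaton finishes one step away from where it started. The miss count ($1$) is still fine, but the literal ``returns to its initial state'' claim fails here. I would reconcile this by appealing to the context of \reflemma{lemma:nestedLoop}: the situation that matters is a loop re-entered in the \WeaklyTaken state, and the analysis above shows \WeaklyTaken is exactly the fixed state for the (\Taken, \NotTaken) pattern, so repeated single-iteration executions truly cycle back to their entry point at one miss apiece. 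Accordingly I would phrase the final proof as the explicit four-way case split, stating the return-to-initial-state property for the \WeaklyTaken, \WeaklyNotTaken, and \StronglyNotTaken starts, and separately flagging \StronglyTaken as settling into the \WeaklyTaken cycle after a single extra step, while emphasizing that the uniform conclusion one actually needs downstream is the invariant $1$-or-$2$ miss bound.
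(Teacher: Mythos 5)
Your case analysis is correct, and it is worth noting that the paper itself states this lemma \emph{without} any proof, so there is nothing to compare against except the claim itself. Tracing the fixed outcome string (\Taken, \NotTaken) through the automaton of \reffig{fig:branch-pred-fsa} from each of the four starting states, exactly as you do, gives: \WeaklyTaken $\to$ \StronglyTaken $\to$ \WeaklyTaken with 1 miss; \WeaklyNotTaken $\to$ \WeaklyTaken $\to$ \WeaklyNotTaken with 2 misses; \StronglyNotTaken $\to$ \WeaklyNotTaken $\to$ \StronglyNotTaken with 1 miss; and \StronglyTaken $\to$ \StronglyTaken $\to$ \WeaklyTaken with 1 miss. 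Your identification of the fourth case as a genuine counterexample to the literal ``returns to its initial state'' claim is right --- the lemma as printed is false for a \StronglyTaken start --- and your repair is the correct one: the final transition of any completed loop execution is on a \NotTaken outcome, so the predictor can never \emph{enter} a subsequent execution in \StronglyTaken; in the repeated-loop setting of \reflemma{lemma:nestedLoop} the entry state is \WeaklyTaken (or possibly a not-taken state after an empty execution), for which the fixed-point property and the 1-or-2-miss bound both hold. In short, your proof is not merely adequate; it supplies the missing argument and flags a boundary case the paper's statement silently ignores. The only suggestion is to state explicitly up front that the reachable initial states exclude \StronglyTaken, so the lemma can be proved as written under that (implicit) hypothesis rather than patched after the fact.
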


\begin{lemma}
Suppose $n=2$. Then the branch predictor must end in either the \WeaklyTaken or \WeaklyNotTaken states, and will incur between 1 and 3 branch misses.
\end{lemma}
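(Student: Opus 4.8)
The plan is to reduce the claim to a finite verification over the four possible initial states of the predictor. Since $n=2$, the loop body runs twice and, exactly as for \refalg{alg:loop}, the conditional test is evaluated $n+1=3$ times, producing the fixed outcome sequence \Taken, \Taken, \NotTaken. Because this input sequence is completely determined, the predictor's entire behavior is a function of its starting state alone. I would therefore trace the FSA of \reffig{fig:branch-pred-fsa} forward through these three outcomes from each of \StronglyTaken, \WeaklyTaken, \WeaklyNotTaken, and \StronglyNotTaken, recording at each step (i) whether the predicted outcome matches the actual outcome, to tally mispredictions, and (ii) the resulting state.

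The structural observation that organizes the argument, and that I would establish first, is that two consecutive \Taken transitions always leave the predictor in one of the two ``taken'' states: from \StronglyNotTaken the state advances to \WeaklyTaken and from \WeaklyNotTaken it advances to \StronglyTaken, while \WeaklyTaken and \StronglyTaken both saturate at \StronglyTaken. Hence after the first two (taken) evaluations the state is \WeaklyTaken or \StronglyTaken regardless of the start. The final \NotTaken evaluation then demotes the state by exactly one step, $\StronglyTaken \to \WeaklyTaken$ and $\WeaklyTaken \to \WeaklyNotTaken$, so the terminal state is necessarily \WeaklyTaken or \WeaklyNotTaken. This proves the first half of the claim without enumerating misses, and it also shows the terminal state can never be a ``strong'' state.

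For the miss bounds I would read off the counts from the same four traces. Starting already biased toward taken (\StronglyTaken or \WeaklyTaken) produces only the single unavoidable miss on the final \NotTaken branch, giving the lower bound of $1$; this miss is forced precisely because, by the structural observation, the two preceding \Taken outcomes leave the predictor predicting \Taken on the last branch (analogous to the lower bound argued in \reflemma{lemma:adjTraversal}). Starting from \StronglyNotTaken incurs a miss on each of the first two \Taken branches plus one on the final \NotTaken branch, for the maximum of $3$; the intermediate start \WeaklyNotTaken gives $2$. Collecting the four values shows the miss count lies between $1$ and $3$ inclusive.

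The work is entirely mechanical, so there is no genuine mathematical obstacle; the only care required is bookkeeping, making sure each of the four initial states is carried through all three transitions consistently with the exact edge labels of \reffig{fig:branch-pred-fsa}, and verifying that both extreme miss counts are actually attained so that the bounds are tight rather than merely valid. The structural lemma about two \Taken transitions is what lets me avoid a brute-force count for the final-state claim and keeps the enumeration short.
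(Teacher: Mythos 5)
Your proof is correct; the paper actually states this lemma without any proof, and your exhaustive trace of the three-outcome sequence (\Taken, \Taken, \NotTaken) from each of the four initial states is precisely the FSA-tracing style the paper uses for its proved lemmas (e.g., Lemmas~\ref{lemma:afterWeaklyTaken} and~\ref{lemma:adjTraversal}). Your structural observation that two consecutive \Taken transitions always land the predictor in \WeaklyTaken or \StronglyTaken, so the final \NotTaken demotes it to \WeaklyTaken or \WeaklyNotTaken, cleanly yields both the terminal-state claim and the tight miss bounds of $1$ (from \StronglyTaken or \WeaklyTaken), $2$ (from \WeaklyNotTaken), and $3$ (from \StronglyNotTaken).
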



\section{Connected Components}
\label{sec:algSV}

For the problem of finding connected components, we assume the Shiloach and Vishkin (SV) algorithm~\cite{ShiloachVishkin}.
It has been implemented on numerous multiprocessor systems, including the massively threaded Cray XMT~\cite{GraphCT,ediger2013graphct} and a variety of \textsc{x86} systems~\cite{mccollscc}. 

SV is based on a propagation technique, and its pseudocode appears in \refalg{alg:svBranch}.
It maintains for each vertex $v$ a component label, $CC_{id}[v]$, and updates this label to place adjacent vertices into the same connected component.
Initially, each vertex $v$ is placed into a connected component by itself, which by convention is a label equal to the vertex number.
As such, there are a total of $|V|$ connected components at this stage.
In the first iteration, each vertex $v$ compares its own label with each of its neighbors, $u \in \adjList(v)$.
Again by convention, the vertex replaces its own label with the minimum label among itself and its neighbors.
The algorithm is iterative and stops when no further label changes occur, maintained by a flag.

Each iteration requires $O(|V|+|E|)$ computations, since the algorithm accesses all vertices and their respective adjacencies. The maximal length of propagation is limited by the graph diameter $d$. As such, the total time complexity of the algorithm is $O(d \cdot (|V|+|E|))$.
Relative to \refalg{alg:svBranch}, there is a shortcut that can reduce the number of iterations to $d/2$~\cite{ShiloachVishkin}. However, this does not change the asymptotic time complexity of the algorithm, and we do not consider it further.


Conceptually, the component labels propagate as \reffig{fig:svPropagation} depicts.
Initially (a), four of the components have minimal labels locally;
these labels propagate gradually, and the label of a given node may change several times, (b)-(e), possibly even within the same iteration.
Eventually, the algorithm reaches a final state (e) where for a fully-connected graph there will be a single connected component.

\begin{algorithm}[t]

\scriptsize

    // Algorithm initialization 
    
    \For {$v \in V$}
    {
        $CC_{id}[v] \leftarrow v$
    }
    
    $change \leftarrow 1$
    
    // Connected component labeling
    
    \While{$change \neq 0$}
    {
        $change \leftarrow 0$

        \For {$v \in V$}
        {

            $c_v \leftarrow CC_{id}[v]$

            \For {$u \in Neighbors[v]$}
            {

                $c_u \leftarrow CC_{id}[u]$

                \If{$c_u \leq c_v$}
                {
                    $CC_{id}[v] \leftarrow c_u$

                    $change \leftarrow 1$
                }
            }
        }

    }
\caption{Branch-based Shiloach-Vishkin algorithm for finding connect components.}
\label{alg:svBranch}

\end{algorithm}

\begin{algorithm}[t]

\scriptsize

    // Algorithm initialization 
    
    \For {$v \in V$}
    {
        $CC_{id}[v] \leftarrow v$
    }
    
    $change \leftarrow 1$
    
    // Connected component labeling
    
    \While{$change \neq 0$}
    {
        $change \leftarrow 0$

        \For {$v \in V$}
        {

            $c_v^{init} \leftarrow CC_{id}[v]$

            $c_v \leftarrow c_v^{init}$

            \For {$u \in Neighbors[v]$}
            {

                $c_u \leftarrow CC_{id}[u]$

                \If{$c_u \leq c_v$}
                {
                    $c_v \leftarrow c_u$
                }

            }

            $CC_{id}[v] \leftarrow c_v$
            
            $change \leftarrow change \lor c_v \oplus c_v^{init}$
        }

    }
\caption{Branch-avoiding Shiloach-Vishkin algorithm for finding connect components.}
\label{alg:svBranchAvoiding}

\end{algorithm}

\subsection{Branch (Mis)predictions in SV}

The standard version of the SV algorithm (\refalg{alg:svBranch}) has four static conditional branches.
To analyze the branch mispredictions, we assume the 2-bit branch predictor model of \refsec{sec:branchPred}.

The first conditional branch is the termination test of the \emph{while} statement. This condition is evaluated $d+1$ times, where $d$ is the diameter of the graph.
Per \refsec{sec:branchPred}, assuming $d \geq 3$, it should incur at most 3 mispredictions, ignoring mispredictions in the body of the loop.

Next, consider the two conditional branches associated with the two for-loops.
The first for-loop iterates over all vertices;
the second for-loop iterates over all neighbors of each vertex, thereby effectively visiting all edges.
From the facts of \refsec{subsec:bpandga}, the first for-loop will incur up to 3 branch misses in total, assuming sufficiently large $|V|$.
The second for-loop is an instance of a repeated loop (see \reflemma{lemma:nestedLoop}), which is executed $|V|$ times.
Though the exact behavior of the inner loop depends on the degree distribution, we can estimate the misses by
applying \refcorr{corr:repeated-loop-misses}, which implies approximately $|V|$ branch misses.

Finally, the if-statement is the hardest to analyze offline.
The actual number of branch mispredictions will depend on the input graph.
To get an idea, consider the example in \reffig{fig:svPropagation}. In the first iterations, vertices are likely to ``swap'' their connected components multiple times, which complicates branch prediction as there may not be a regular pattern.
As iterations proceed, labels begin to stabilize, making this condition more predictable.
Thus, we should expect to see many mispredictions initially, gradually decreasing as iterations proceed.

\subsection{Branch-avoiding SV}

Algorithm \ref{alg:svBranchAvoiding} presents the pseudocode for a branch-avoiding algorithm for the Shiloach-Vishkin connected component formulation. This algorithm compares the values of the connected component ids; however does not branch based on the value of the comparison. Instead this approach uses a conditional move that copies the value into the variable $c_v$ if and only if the id of $u$ is smaller than the the value in $c_v$.  For the Shiloach-Vishkin algorithm the value of the connected component of $v$ is stored in $c_v$ which is a register, meaning that the number of writebacks (stores) is $|V|$.
To ensure the correctness of the algorithm and that the algorithm will stop at some point, the variable $change$ is updated using a bitwise OR of bitwise XOR between the initial $c_v^{init}$ and the updated $c_v$. If the value of the connected component changed for the current vertex, $c_v$ is not equal to $c_v^{init}$ and their XOR value is non-zero. Accordingly, if any connected component changed, $change$ variable will have non-zero after traversing vertices in $V$.

\section{Breadth First Search}
\label{sec:algBFS}

\begin{algorithm}[t]
\scriptsize
  $Q \leftarrow$ empty queue\;
  initialize $d[v \in V] \leftarrow \infty$\;
  enqueue $r \rightarrow Q$\;
  set $d[r] \leftarrow 0$\;
  \While {$Q$ not empty} {
    dequeue $v \leftarrow Q$\;
    \For {all neighbor $w$ of $v$} {
	
			\If(// $w$ found for the first time) {$d[w] = \infty$}{
	$enqueue \, w \rightarrow Q$\;
	$d[w] \leftarrow d[v] +1$\;
      }  
    }
  }
\caption{Pseudocode for branch-based Breadth First Search starting from a vertex $r$}
\label{alg:algBFS}

\end{algorithm}

\begin{algorithm}[t]
\scriptsize

  $Q \leftarrow$ empty queue\;
	$Q_{len} \leftarrow 1$;
  initialize $d[v \in V] \leftarrow \infty$\;
  enqueue $r \rightarrow Q$\;
  set $d[r] \leftarrow 0$\;
  \While {$Q$ not empty} {
    dequeue $v \leftarrow Q$\;
		$next\_level \leftarrow d[v]+1$;
    \For {all neighbor $w$ of $v$} {
			
			$LOAD (temp, d[w])$\;
			$CMP (temp,d[v])$\;
			$Q[Q_{len}] \leftarrow w$\;
			$COND\_MOVE\_GREATER (temp, next\_level)$\;
			$COND\_ADD(Q_{len}, 1)$\;
			$STORE(temp,d[w])$\;

    }
  }

\caption{Pseudo code for branch-avoiding Breadth First Search}
\label{alg:algBFSBranchAvoiding}

\end{algorithm}

Given a graph and a root vertex $r$, a breadth-first search (BFS) computes the distance of every node in the graph to $r$.
The pseudocode for the BFS algorithm we consider appears in \refalg{alg:algBFS}.

\subsection{Branch (Mis)predictions in BFS}

The BFS algorithm of \refalg{alg:algBFS} has three branches: \emph(while), \emph(for), and \emph(if). We consider the impact of branch mispredictions assuming a 2-bit branch predictor. In particular, we estimate a practical lower bound on the number of branch misses.
We validate this estimate in \refsec{sec:experiment}.

The first branch, \emph(while), is used to iterate through the queue of vertices in the current and next frontiers. For a breadth first search that finds the vertices $\hat{V}$, s.t. $\hat{V} \subseteq V$ , the condition of the \emph{while} branch will be evaluated a total of $|\hat{V}|+1$ times. Based on the lemmas of \refsec{subsec:bpandga} the number of branch misprediction for this statement is $O(1)$. 

\begin{figure*}[tp]
	\centering
		\includegraphics[width=0.95\textwidth]{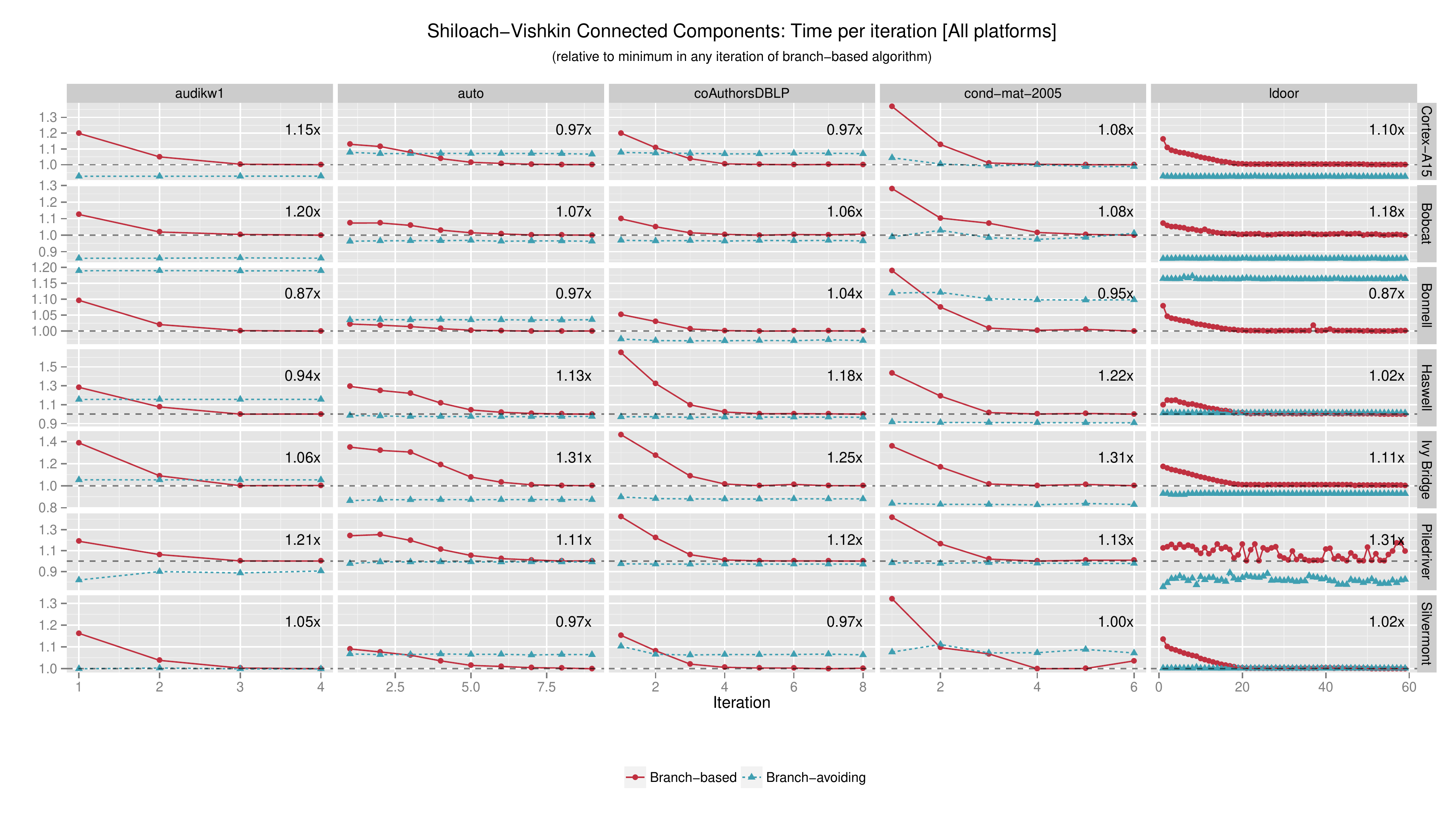}
	\caption{Time as a function of the iteration for the Shiloach-Vishkin algorithm.}
	\label{fig:sv--time--vs--iteration}
\end{figure*}

\begin{figure*}[tp]
	\centering
	\includegraphics[width=0.95\textwidth]{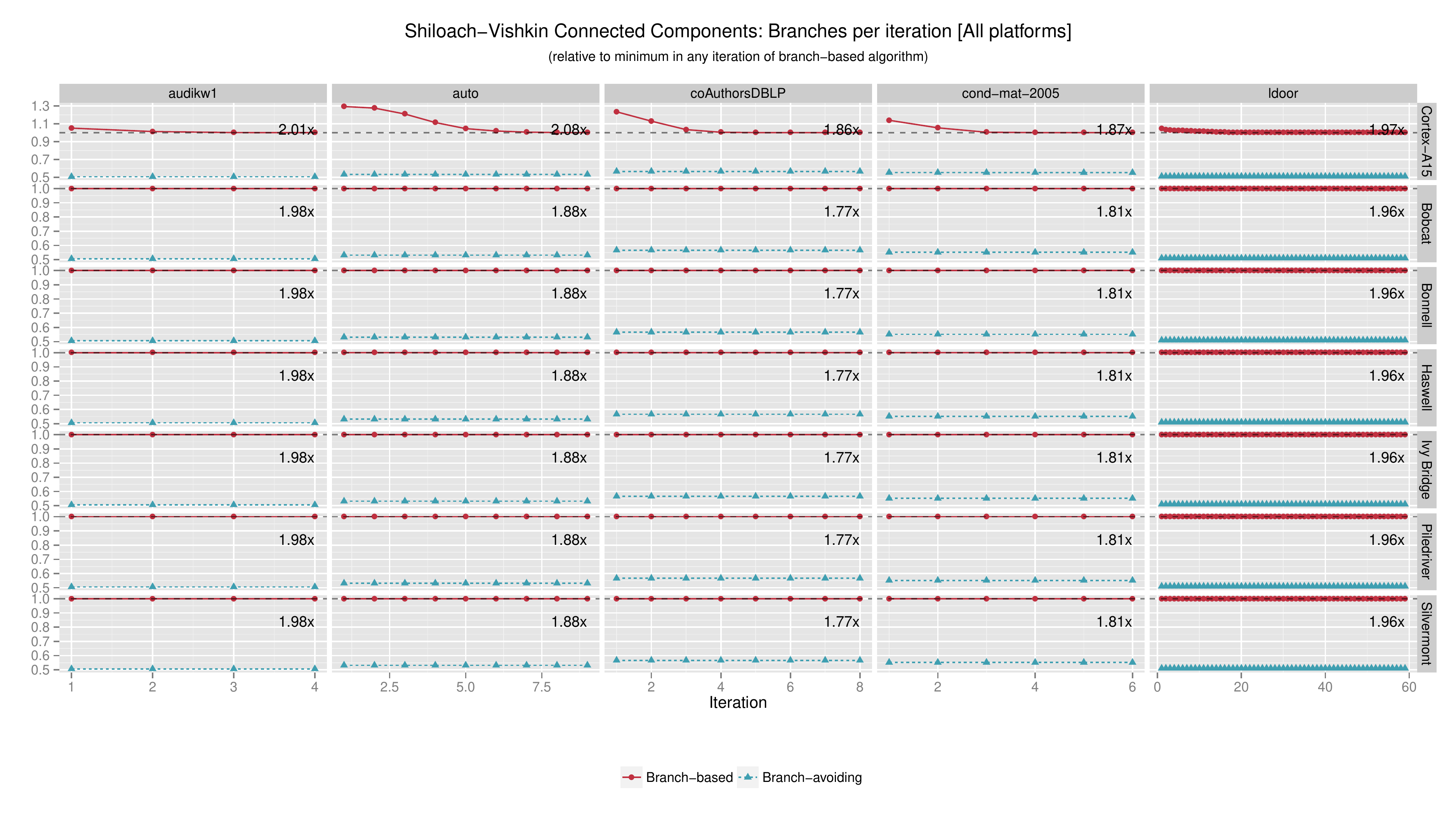}

	\caption{Branches as a function of the iteration for the Shiloach-Vishkin algorithm.}
	\label{fig:sv--branches--vs--iteration}
\end{figure*}

\begin{figure*}[t]
	\centering
		\includegraphics[width=0.95\textwidth]{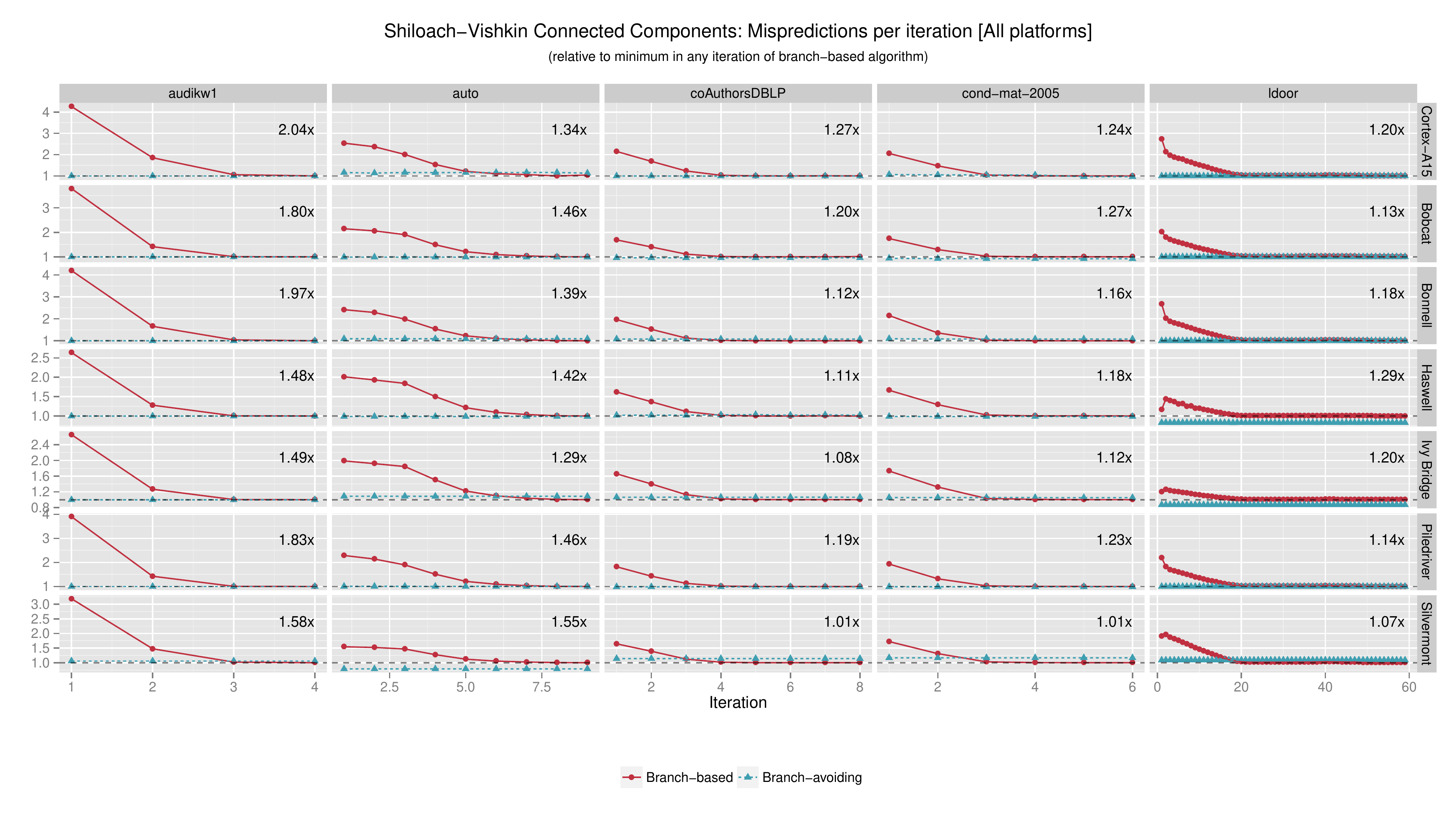}

	\caption{Branch mispredictions as a function of the iteration for the Shiloach-Vishkin algorithm.}
	\label{fig:sv--mispredictions--vs--iteration}
\end{figure*}

The second conditional branch, the \emph{for} statement, is responsible for traversing the adjacency list of a given vertex. Based on the lemmas from \refsec{subsec:bpandga}, as this loop is executed $\hat{V}$ times, there should be approximately $\hat{V}$ misses.
For undirected graphs, each vertex found in the BFS traversal has at least on vertex that will be traversed. For directed graphs a situation can arise that a vertex does not have any outbound adjacencies. 
In practice, there should be nearly $|\hat{V}|$ branch misses for the \emph{for} statement.

The last of these conditions, the \emph{if} statement, checks if a vertex has been found. As each vertex can be found only once, this branch will be taken at most $|\hat{V}|$ times. This branch statement is evaluated $|\hat{E}|$ times where $\hat{E} \subseteq E \wedge \hat{E}={((v,u)|v\in \hat{V}})$. The exact order in which this branch will be $TAKEN$ is highly dependent on the order in which the vertices and edges are accessed.
For example, when the neighbors of the root are traversed, they will all be added to the queue (i.e. branch TAKEN) and there will be fewer misses. 
Now consider the neighbors of the root, these might have some common neighbors in the second frontier, however, only the first traversal of the common neighbor will take the branch. 
In the worst case a scenario can arise in which the state of the \emph{if} branch moves back and forth between WEAKLY-NOT-TAKEN and WEAKLY-TAKEN. This can potentially double the number of branch misses for the \emph{if} branch. As such the \emph{if} statement can have upto $2\cdot|\hat{V}|$ branch misses. 

Thus the upper-bound on the number of branch misses for the branch-based algorithm is $3\cdot|\hat{V}| + O(1)$.

\subsection{Branch Avoiding BFS}

\RefAlgorithm{alg:algBFSBranchAvoiding} presents the pseudocode for a BFS branch-avoiding algorithm. Similar to the Shiloach-Viskin algorithm a compare without branch is used to compare the level of the current vertex with the level of its adjacency. The adjacent vertex, $w$, is placed at the end of the queue. Based on hardware flags, two conditional operations are completed. The first will conditionally move the distance to the vertex if it is found for the first time. The second conditional operation increases the size of the queue. If an element is new then the queue size is increased and the element is enqueued. If the vertex is not new, then it has been placed ``outside'' the queue and will be overwritten when a new vertex is found. When this is done the distance of $w$ is written back to. This means that there are $O(E)$ writebacks.

\section{Empirical Results}
\label{sec:experiment}
\newcommand{\refarndale}{\textsf{Cortex-A15}\ }
\newcommand{\refbobcat}{\textsf{Bobcat}\ }
\newcommand{\refbonnell}{\textsf{Bonnell}\ }
\newcommand{\refhsw}{\textsf{Haswell}\ }
\newcommand{\refivb}{\textsf{Ivy Bridge}\ }
\newcommand{\refpld}{\textsf{Piledriver}\ }
\newcommand{\refslv}{\textsf{Silvermont}\ }

\begin{figure*}[tp]
	\centering
		\includegraphics[width=0.95\textwidth]{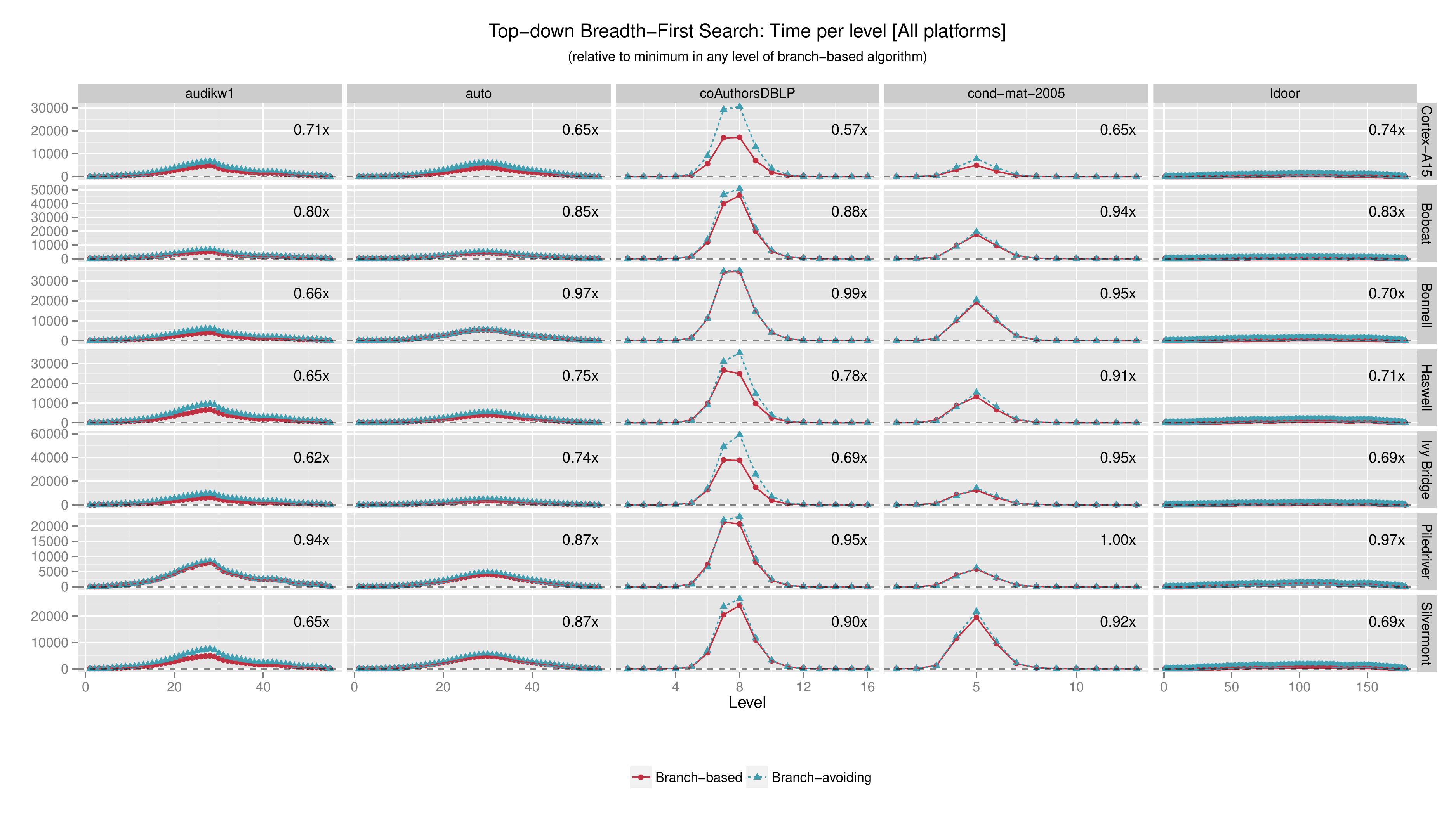}
  	\caption{Time as a function of the iteration for BFS.}
	\label{fig:bfs--time--vs--iteration}
\end{figure*}

\begin{figure*}[tp]
	\centering
	\includegraphics[width=0.95\textwidth]{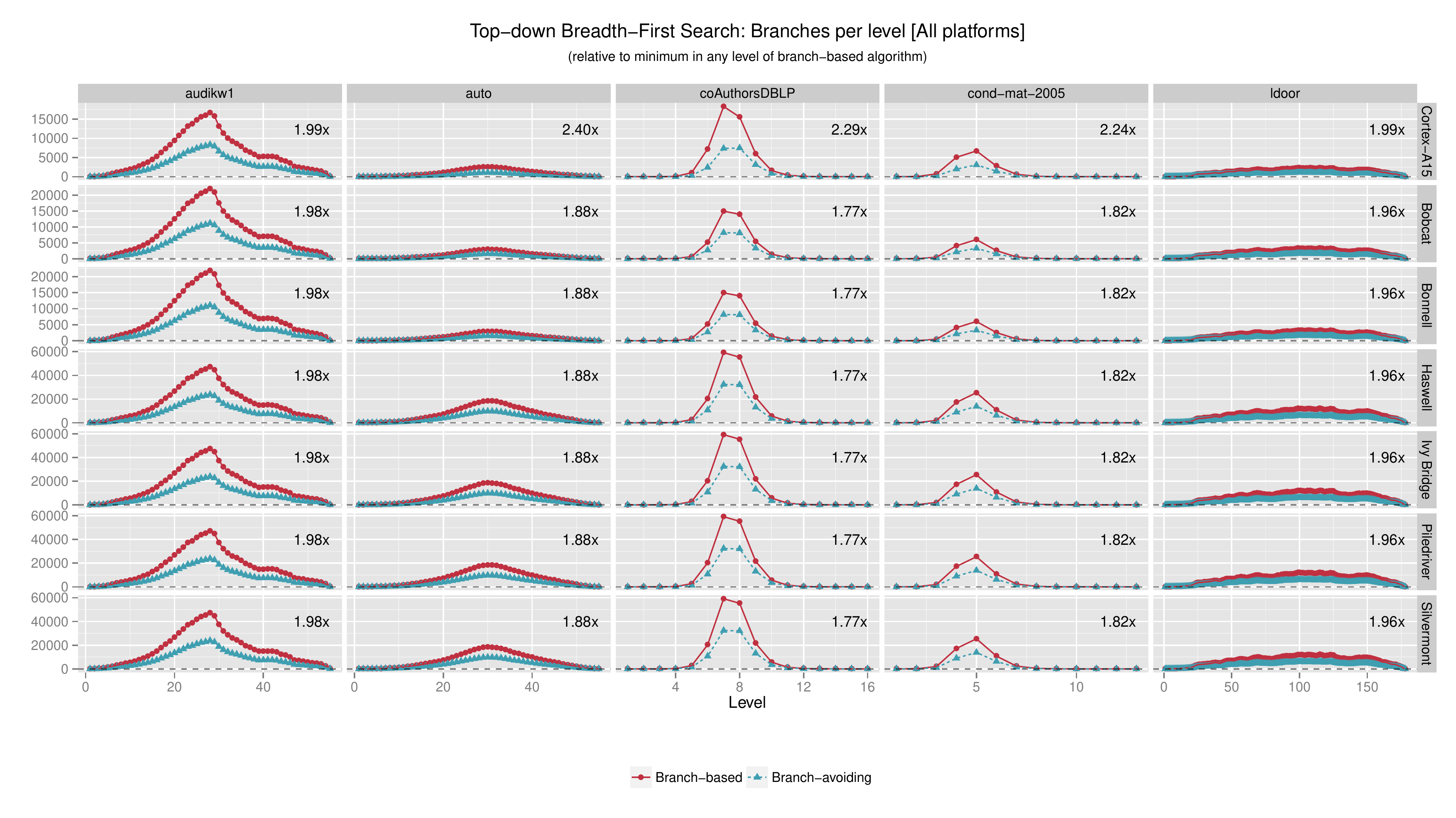}

	\caption{Branches as a function of the iteration for BFS.}
	\label{fig:bfs--branches--vs--iteration}
\end{figure*}

\begin{figure*}[t]
	\centering

	\includegraphics[width=0.95\textwidth]{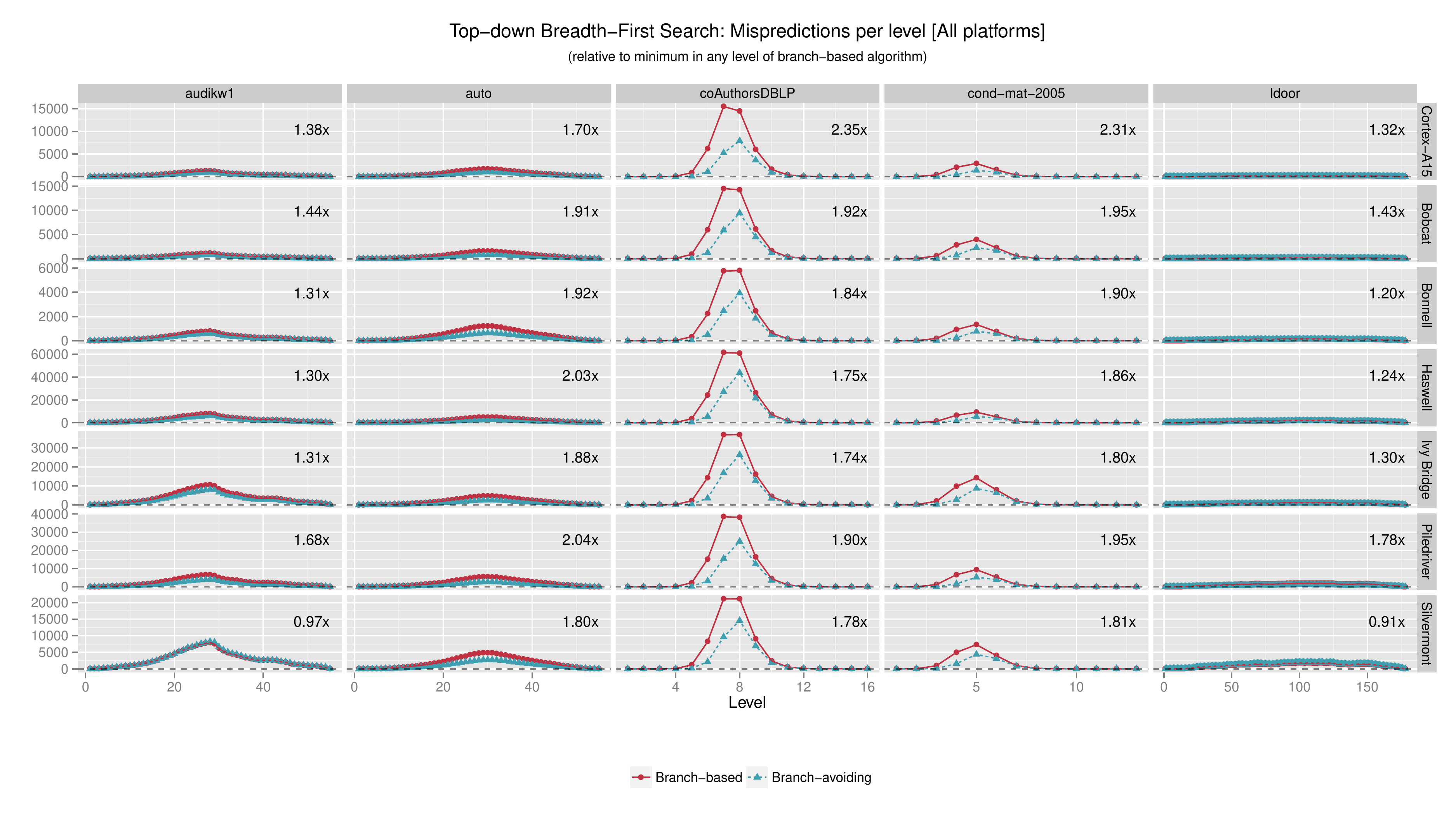}

	\caption{Branch mispredictions as a function of the iteration for BFS.}
	\label{fig:bfs--mispredictions--vs--iteration}
\end{figure*}

\begin{table*}[t!]
\caption{Systems used in experiments.}
\tiny
\centering
\begin{tabular}{|c|c|c|c|c|c|c|c|c|} \hline 
Architecture & Microarchitecture & Processor           & Frequency & L1 Cache & L2 Cache & L3 Cache & DRAM Type    \\ \hline \hline
ARM v7-A     & Cortex-A15 [arn]       & Samsung Exynos 5250 & 1.7 GHz   & 32 KB    & 1 MB     &          & SC DDR3-800  \\ \hline 
x86-64       & Piledriver [pld]       & AMD FX-6300         & 3.5 GHz   & 16 KB    & 2 MB     & 8 MB     & DC DDR3-1600 \\ \hline 
x86-64       & Bobcat            & AMD E2-1800         & 1.7 GHz   & 32 KB    & 512 KB   &          & SC DDR3-1333 \\ \hline 
x86-64       & Haswell [hsw]          & Intel Core i7-4770K & 3.5 GHz   & 32 KB    & 256 KB   & 8 MB     & DC DDR3-2133 \\ \hline 
x86-64      & Ivy-Bridge [ivb]       & Intel Core i3-3217U & 1.8 GHz   & 32 KB    & 256 KB   & 3 MB     & DC DDR3-1600 \\ \hline 
x86-64       & Silvermont [slv]       & Intel Atom C2750    & 2.4 GHz   & 24 KB    & 1 MB     &          & DC DDR3-1600 \\ \hline 
x86-64      & Bonnell           & Intel Atom 330      & 1.6 GHz   & 24 KB    & 512 KB   &          & SC DDR3-800  \\ \hline 
\end{tabular}

\label{tab:systems}
\end{table*}

\begin{table}[t!]
\begin{center}
\small

\caption{Graphs from the 10th DIMACS Implementation Challenge used in our experiments.}
\label{tab:dimacs}
\begin{tabular}{|l|c|c|c|}\hline
Name  & Graph Type &$|V|$ & $|E|$ \\\hline \hline

audikw1 &Matrix& 943,695 & 38,354,076\\\hline
auto &Partitioning& 448,695, & 3,314,611 \\\hline
coAuthorsDBLP  &Collaboration&299,067 & 977,676\\\hline
cond-mat-2005&Clustering & 40,421 & 175,691\\\hline
ldoor & Matrix & 952,203& 22,785,136\\\hline 



\end{tabular}

\end{center}
\end{table}

\begin{figure*}[t]
	\centering
		\subfloat[]{\includegraphics[width=0.48\textwidth]{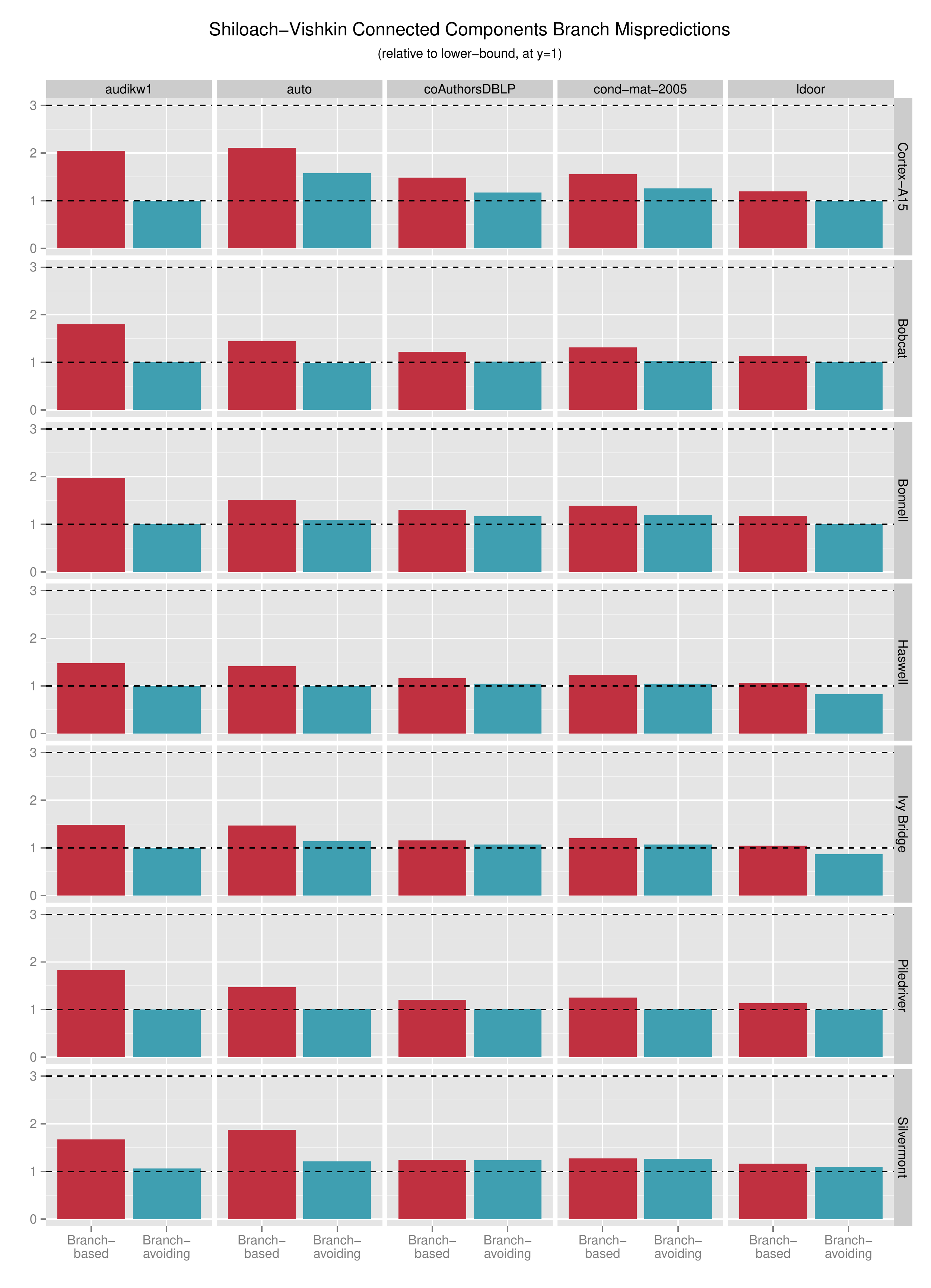}}
		\subfloat[]{\includegraphics[width=0.48\textwidth]{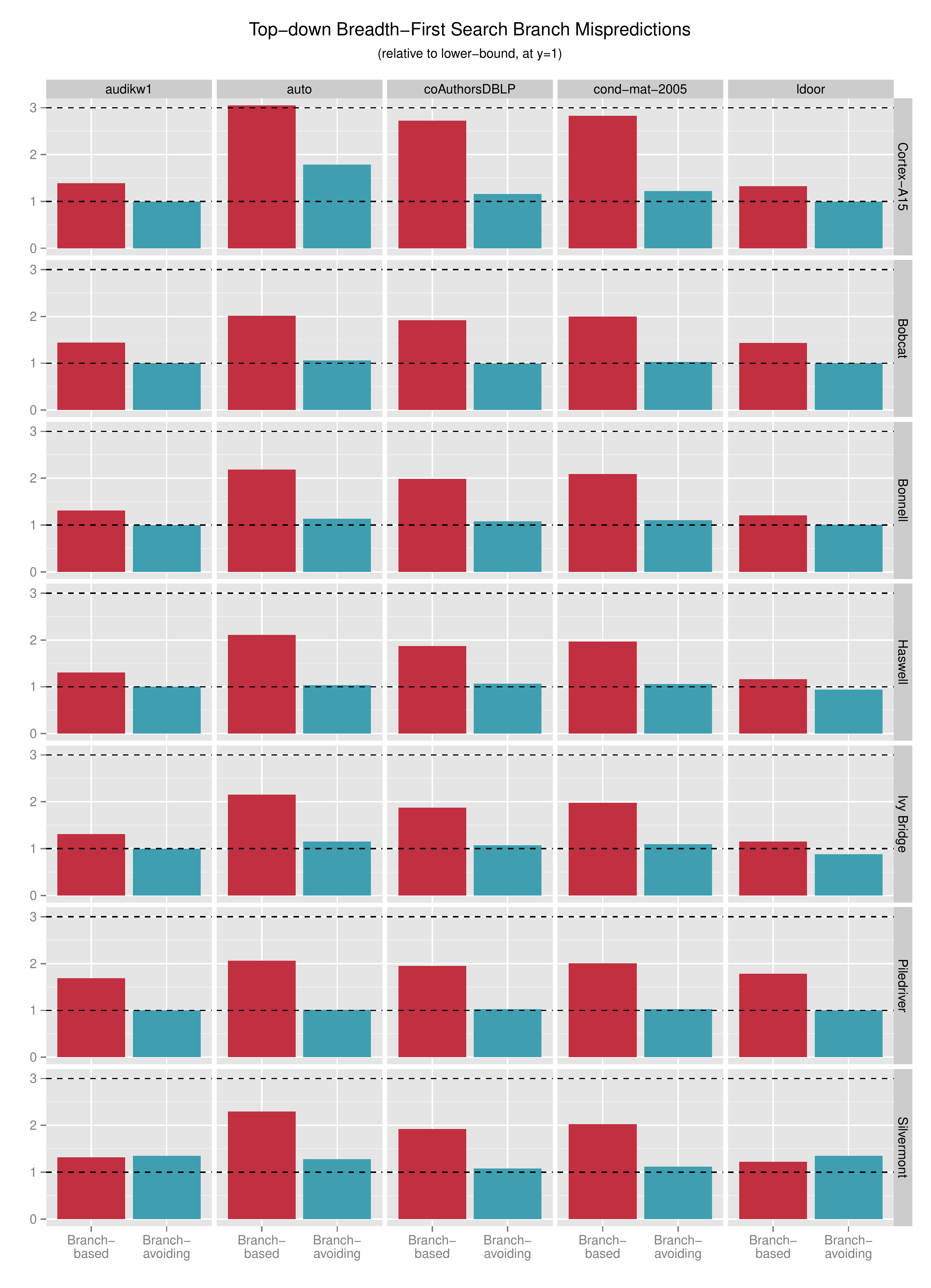}}

	\caption{Lowerbounds on the number of branch mispredictions for SV (based on the model given in 
\RefSection{sec:algSV}) and BFS (based on the model given in \RefSection{sec:algBFS}) . The bars for each algorithm show 
the ratio of branch mispredictions in comparison to the lower bound. For BFS, an upper bound is also provided.}
	\label{fig:mispred--bounds--algs}
\end{figure*}

\begin{figure*}[!t]
	\centering
                \subfloat[]{\includegraphics[width=0.57\textwidth]{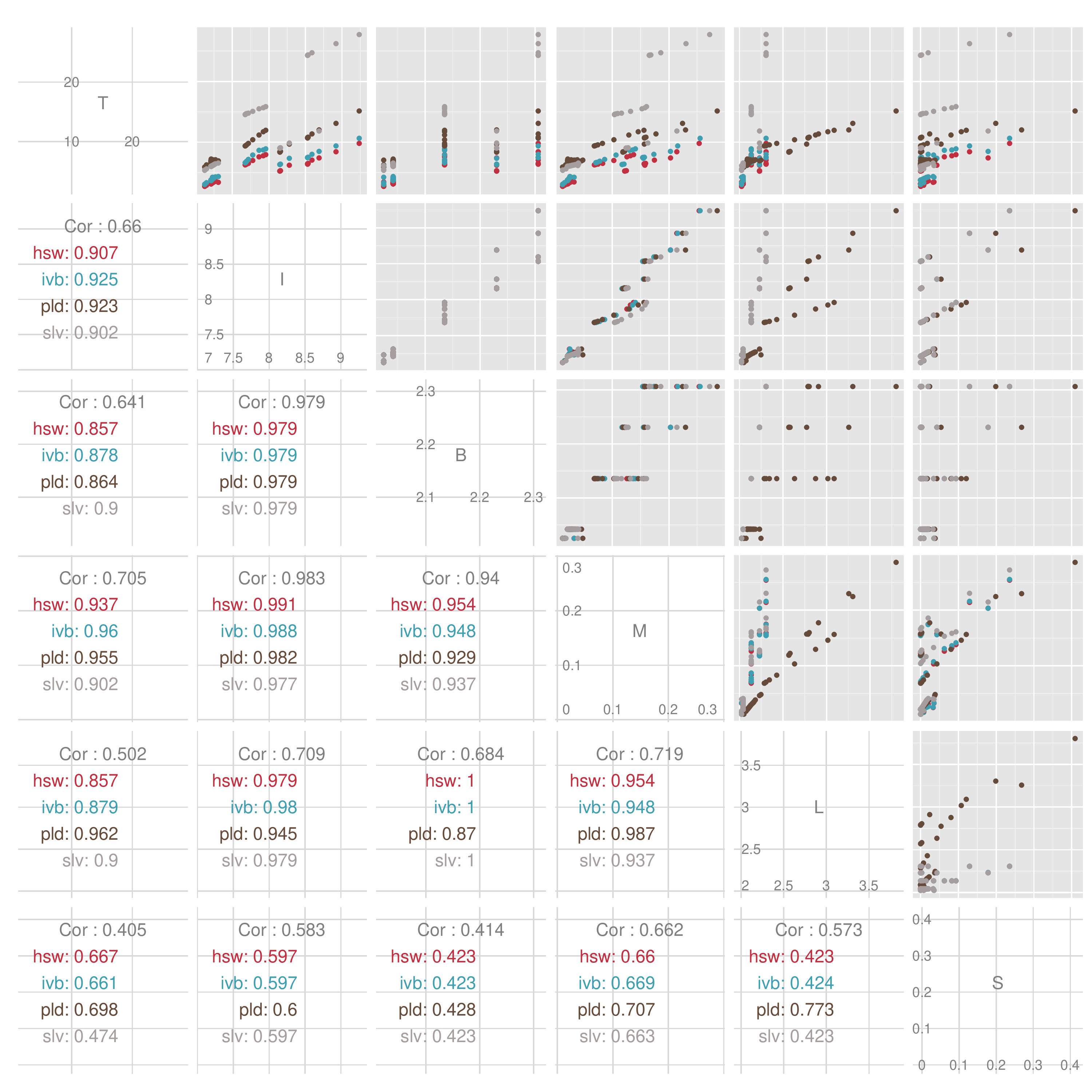}}

                \subfloat[]{\includegraphics[width=0.57\textwidth]{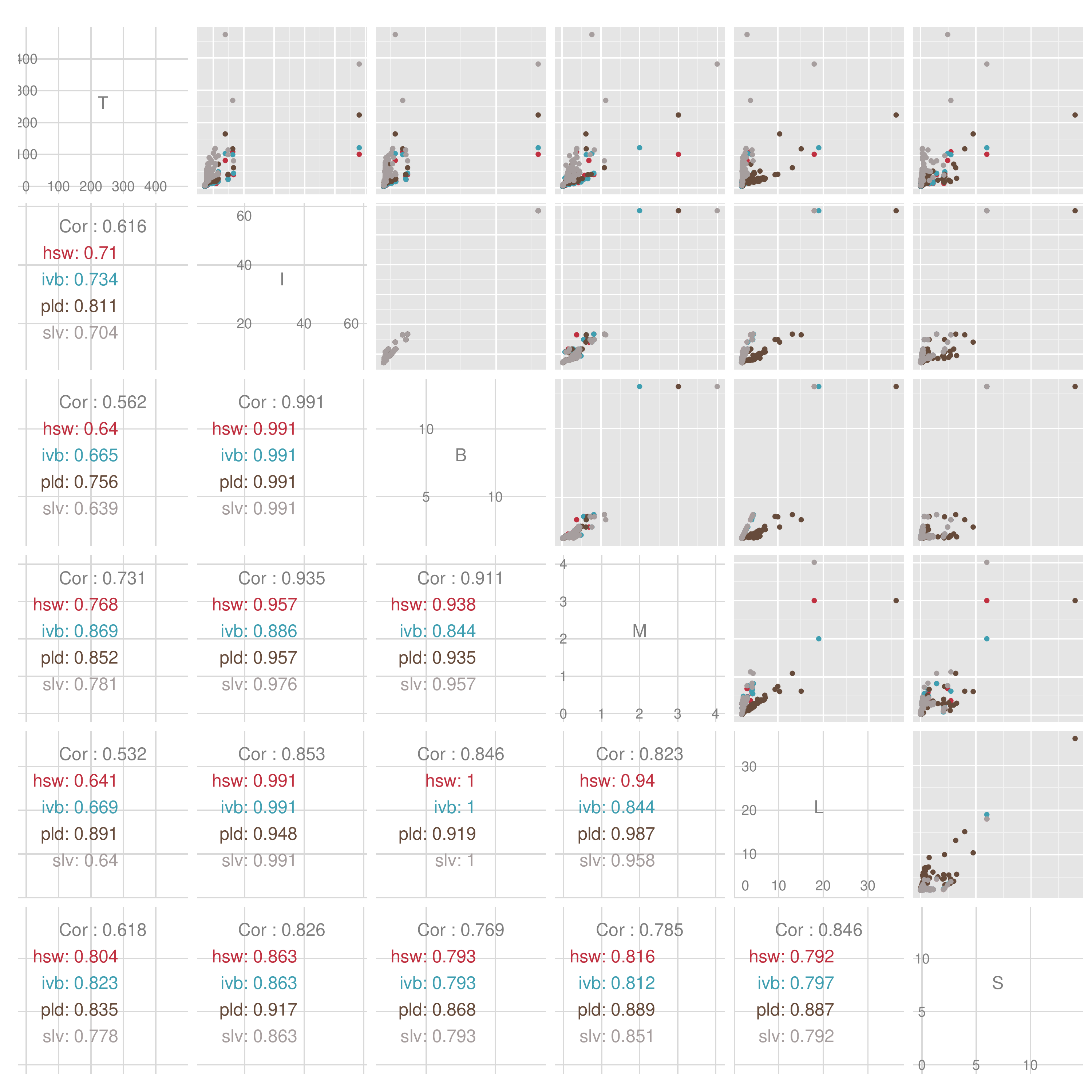}}
	\caption{Correlations among time (T), instructions (I), branches (B), mispredictions (M), load operations (L), 
and store operations (S) per edge, both for (a) SV, and (b) BFS. Each sample is a different iteration/level for one of 
the graph.
        }
        \label{fig:corr}
\end{figure*}

\subsection{Experimental Setup}

For experimental evaluation we tried several implementations of Shiloach-Vishkin connected components algorithm and the top-down breadth-first search algorithm for \textsc{x86-64} and \textsc{ARM} architectures. Unfortunately, compilers do not provide explicit control over the use of branches or conditional moves in the generated code, which complicated our analysis.
The \textsc{x86-64} systems compilers tended to generate conditional branches when conditional moves could be used; albeit we found two ways to make compilers avoid branches, both of them involved inefficiencies: 1) inline assembly allowed manual selection of instructions; however the compilers generated suboptimal code around the inlined assembly; 2) force the compiler to use \textbf{SETcc} instructions by storing the result of the comparison into a byte variable followed by extending the bit mask for conditional selection - this approach caused the compiler to generate $8-9$ instructions where a single \textbf{CMOVcc} instruction could suffice. On the \textsc{ARM} system we had a reverse problem: instead of a conditional branch or move, compilers preferred to use conditional store instructions, which impose big performance penalty on \textit{Cortex-A15}.

For better control of generated code we implemented both connected components and breadth-first search algorithms in \textit{x86-64} and \textsc{ARM} assembly using the Pea\-chPy \cite{dukhanpeachpy} framework. We performed our experiments on $7$ systems with different microarchitectures, these are presented in Table \ref{tab:systems}. On all systems the assembly implementations performed at least as well as C implementations. The algorithms were tested on graphs taken from the DIMACS 10 Graph Challenge \cite{DIMACS10}, detailed in Table \ref{tab:dimacs}.

\subsection{Connected Components}

\RefFigure{fig:sv--time--vs--iteration} depicts the ratio of the execution times as a function of the iteration for the Shiloach-Vishkin algorithm on the systems in \reftab{tab:systems}.  The ratio for each of the algorithm is between the execution time of a given iteration and the fastest iteration of the branch-based algorithm. The branch-based algorithm is depicted by the red curve and the branch-avoiding algorithm is depicted by the blue curve. The abscissa for these figures is the iteration of the algorithm. In each subfigure the total speedup of the branch-avoiding algorithm over the branch-based algorithm is given. Note that for several of the iterations the difference between the branch-based algorithm and the branch-avoiding is high as $30\%-50\%$, with the branch-avoiding algorithm being the faster of these. In a handful of cases, specifically on the \refbonnell system, the branch-based algorithm is $20\%$ faster than the branch-avoiding algorithm.

Recall that as the connected component id propagates in the graph fewer vertices change their connected component - this also makes the branch predictor job easier as it will accurately predict the condition. \RefFigure{fig:sv--branches--vs--iteration} and \RefFigure{fig:sv--mispredictions--vs--iteration} depict the ratio of branches and number of branch mispredictions as a function of the iteration, respectively.

On some systems, for example \refarndale, the branch-avoiding algorithm offers better performance for all iterations of the algorithm over all the graphs. While for some systems, in the initial iterations the branch-avoiding algorithms offers better performance and in the later iterations the branch-based algorithm gives better performance. This is both system- and graph-dependent. In the case that there is a crossover point for performance dominance of the algorithms, it is a single crossover point from where the branch-avoiding algorithm is initially faster to where the branch-based is faster in the later iterations. 
The significance of the single crossover point is that this may allow creating a hybrid algorithm that uses the faster of the two algorithms based on the iteration.

\RefFigure{fig:sv--branches--vs--iteration} shows that the branch-based algorithm has nearly double the number of branches than the branch-avoiding algorithm. For the Intel and AMD systems, the number of branches is constant throughout the iterations while for the \refarndale system it is not. For the Intel and AMD systems, the hardware counter returns the number of retired branch instructions while the \textsc{ARM} system returns the number of dispatched branches. Due to the higher misprediction rate in the first iterations, the number of dispatched branches is also higher as these are flushed instructions.

The branch-based algorithm can potentially have as many as 4X the number of branch mispredictions as that of the branch-avoiding algorithm, \reffig{fig:sv--mispredictions--vs--iteration}. In all cases the branch-avoiding algorithm has fewer branches and branch mispredictions. Note, for most graphs, the ratio between the total number of mispredictions for the two algorithms (denoted by the number at the top-right corner of each subfigure) for a given graph is within a small region for all systems.

\RefFigure{fig:mispred--bounds--algs} (a) depicts the ratio of the total number of branch mispredictions for the two algorithm versus the lower-bound on the number of branch mispredictions. The lower-bound is denoted with a black line at y=1 and this is equal to the lower-bound presented in \refsec{sec:algSV} for the 2-bit branch-predictor.
For most systems, the branch-avoiding algorithms is near the lower-bound, while the branch-based algorithm is well above this line. For the \refarndale system, there are three different graphs in which the branch misprediction rate is well above the lower-bound, for the $auto$ graph the branch misprediction rate is $50\%$ above the lower-bound. This means that implemented branch-predictor in fact increases the misprediction rate. Both the \refbonnell and the \refslv systems also have higher than lower-bound miss rate for several of the graphs. However, these are lower than the miss rate of the \refarndale system.

\subsection{Breadth First Search}

\RefFigure{fig:bfs--time--vs--iteration} depicts the ratio of the execution times as a function of the iteration for BFS.  The ratio for each of the algorithm is between the execution time of a given iteration and the fastest iteration of the branch-based algorithm. The branch-based algorithm is depicted by the red curve and the branch-avoiding algorithm is depicted by the blue curve. The abscissa for these figures is the iteration of the algorithm. In each subfigure the total speedup of the branch-avoiding algorithm over the branch-based algorithm is given. In most cases the branch-avoiding algorithm does not offer a speedup and in fact causes a slowdown for BFS. The branch-avoiding algorithm performs the best on the \refslv system where it is faster for 4 out the 5 test cases.

\RefFigure{fig:bfs--branches--vs--iteration} depicts the ratio of the number of branches of the branch-based algorithm with the branch-avoiding algorithm. 
\RefFigure{fig:bfs--mispredictions--vs--iteration} depicts the ratio of the number of branches mispredictions of the branch-based algorithm with the branch-avoiding algorithm. 
Note the similarity of these figures with the time per iteration, \reffig{fig:bfs--time--vs--iteration}. Also, note that the branch-based algorithm has nearly double the number of branches than the branch-avoiding algorithm.
If one considers the average number of branches per edge traversal in each frontier, these ratios are still maintained.

\RefFigure{fig:mispred--bounds--algs} (b) depicts the ratio of the total number of branch mispredictions for the two algorithm versus the lower-bound on the number of branch mispredictions. The lower-bound is denoted with a black line at y=1 and the upper-bound is denoted with a black line at y=3. The upper and lower bounds on the number of branch mispredcition for a 2-bit branch predictor BFS was discussed in \refsec{sec:algSV}. The lower-bound is dependent on the number of vertices found in the traversal, $\hat{V}$ and the upper bound is 3 times the lower-bound. Similar to the Shiloach-Vishkin algorithm, the branch-avoiding algorithms is near the lower-bound for most graphs and on most systems. Again, the \refarndale system has some of the higher misprediction rates for both algorithms.

\subsection{The effects of misprediction}

To get an idea of how strongly mispredictions influence performance, we show pairwise correlations among the \emph{a priori} most likely predictors of execution time: instructions, loads, stores, and based on the subject of this paper, branches and branch mispredictions.
\RefFigure{fig:corr} shows this data for the branch-based versions of both SV (left half) and BFS (right half).
Each $6\times 6$ grid of subplots shows the correlations among time, instructions, branches, mispredictions, loads, and stores, measured per edge traversal.
For example, (row 1, column 2) subplot in each half is a scatter plot comparing time (``T'') on the y-axis with instructions (``I'') on the x-axis.
The points are color-coded by platform, on the subset of platforms that supported all necessary hardware performance counters.
For each $(R, C)$ plot in the upper-triangle, the computed correlation coefficients appear in the transposed $(C, R)$ position of the lower-triangle.

In the case of SV, mispredictions more strongly correlate with time than instructions, branches, loads, and stores.
Though not a strict proof-of-cause, this observation is nevertheless somewhat surprising, as it implies mispredictions may be nearly or even more important than memory behavior.
By contrast, in the case of BFS, the correlations with stores and mispredictions is roughly equal, with stores being slightly more strongly correlated than time.
This confirms the performance behavior seen previously, namely, that eliminating branches at the cost of increasing stores should not be expected to improve performance. In practice the number of stores were increased by two order of magnitude for some graphs.

\section{Conclusions}
\label{sec:conclusions}

On the one hand, our study is a positive result for the branch-avoiding technique in the case of SV, where mispredictions are more strongly correlated to time than even memory traffic, much to our surprise.
This raises the question of whether branch-avoidance might be important in other computations, and whether increased microarchitectural support for predication-like instructions might have more significant benefits.

On the other hand, our study is a negative result for BFS.
Stores are as critical as branch mispredictions, so the tradeoff that reduces branches at the cost of significantly increasing stores cannot pay off.
One question is why: although total stores increased by much as $100\times$, the actual slowdown was always $2\times$ or less.
Indeed, the extra stores are purely ``local'' in that they should mostly hit in cache, by design of the implementation.
Thus, there is a potential in the microarchitecture to address whatever resource constraints the additional stores impose, such as buffers for more outstanding operations.

An additional question is to what extent compiler transformations can or should replicate the transformations we implemented by hand.
Though not shown explicitly, although the \textsc{ARM} system supported predicated instructions, no compiler produced transformations equivalent to our hand-generated code.
Whether the gap can be filled remains open, in our view.




%

\bibliographystyle{abbrv}

\bibliography{gea}  


\end{document}